\title{Strengthened Partial-Ordering Based ILP Models for the Vertex Coloring Problem} %
\author{Adalat Jabrayilov}{
  Institute  of  Computer  Science,  University  of  Bonn,
  {Bonn, Germany}}
  {adalat.jabrayilov@uni-bonn.de}
  {https://orcid.org/0000-0002-1098-6358}{}
\author{Petra Mutzel}{
  Institute  of  Computer  Science,  University  of  Bonn,
  {Bonn, Germany}}
  {petra.mutzel@cs.uni-bonn.de}
  {https://orcid.org/0000-0001-7621-971X}{} 
\authorrunning{A.\ Jabrayilov and P.\ Mutzel}
\keywords{Graph theory, integer linear programming, polyhedral theory, vertex coloring problem}%
\begin{document} %

\maketitle

\def\subjectto{\textrm{s.t.}}
\def\Z{\mathbb{Z}}
\def\p{\mathcal{P}}
\def\Q{\mathcal{Q}}
\def\x{\mathcal{X}}
\def\X{\mathcal{X}}
\def\I{\mathcal I}
\def\np{\mathcal{NP}}
\def\W{\mathcal{W}}
\def\V{\mathbb{V}}
\def\pop{\textrm{POP}\xspace}
\def\pops{\pop^*}
\def\poph{\textrm{POPH}}

\def\head{\text{head}}
\def\tail{\text{tail}}

\def\ie{i.e.}
\def\eg{e.g.}
\def\Wlog{W.l.o.g.}
\def\len#1{\ell_{#1}}
\def\inarc{\delta^{-}}
\def\red#1{\textcolor{red}{#1}}

\def\todos{0}
\def\todoI#1{
    \ifnum\todos=1 %
      \todo[inline]{#1}
    \fi
  }
\def\vs{\vspace{1cm}}
\def\hs{\hspace{1cm}}

\def\walk{\text {W}}
\def\alert#1{\textcolor{red}{#1}}

\def\ppop{\mathcal{P}}
\def\ppops{\mathcal{P}^*}
\def\pass{\mathcal{A}}

\def\val{\nu}
\def\val#1{\nu(#1)}
\def\gap#1{gap_{#1}}

\def\asssym{\textrm{ASS-S}}
\def\ass{\textrm{ASS}}
\def\rep{\textrm{REP}}
\def\pophybrid{\textrm{POP2}}
\def\minimize{\mathrm{min:}}
\def\subjectto{\mathrm{s.t.}}
\def\todoRM#1{\todo[inline]{#1}}
\def\todoADD#1{\todo[inline,color=green]{#1}}
\def\red#1{\textcolor{red}{#1}}
\def\green#1{\textcolor{green!50!black}{#1}}
\def\changeto#1#2{\\\red{-- #1} \\\green{++ #2}}

\def\msum{\textstyle\sum}
\def\Msum{\displaystyle\sum}

\newdimen\mydim
\newdimen\mydimII
\mydim=1.4cm
\mydimII=0.5\mydim

\def\vcp{\textrm{VCP}\xspace}

\def\vcppop{\textrm{POP}\xspace}
\def\vcppopI{\textrm{POP1}\xspace}
\def\vcppopII{\textrm{POP2}\xspace}
\def\vcppoph{\textrm{POPH}\xspace}
\def\vcppophI{\textrm{POPH1}\xspace}
\def\vcppophII{\textrm{POPH2}\xspace}

\def\vcpass{\textrm{ASS}\xspace}
\def\vcprep{\textrm{REP}\xspace}
\def\vcparep{\textrm{AREP}\xspace}

\def\po{partial-ordering\xspace}
\def\Po{Partial-ordering\xspace}

\def\Forall{\text{for all }}
\def\subjectto{\textrm{s.t.}}
\def\SubjectTo{\textrm{subject to}}

\definecolor{blueI}{RGB}{0,100,255}
\def\TODO#1{\textcolor{blueI}{#1}\marginpar{\textcolor{blueI}{$\longleftarrow$}}}
\def\green#1{\textcolor{green!50!black}{#1}\marginpar{\textcolor{green!50!black}{$\longleftarrow$}}}

\def\a{x}   %

\begin{abstract}
The vertex coloring problem asks for the minimum number of colors that can be assigned to the vertices of a given graph such that each two adjacent vertices get different colors. 
For this NP-hard problem, a variety of integer linear programming (ILP) models have been suggested.
Among them, the \emph{assignment} based and the \emph{\po} based ILP models are attractive due to their simplicity and easy extendability. 
Moreover, on sparse graphs, both models turned out to be among the strongest exact approaches for solving the vertex coloring problem. 
In this work, we suggest additional strengthening constraints for the \emph{\po} based ILP model,
and show that they lead to improved lower bounds of the corresponding LP relaxation.
Our computational experiments confirm that the new constraints are also leading to practical improvements.
In particular, we are able to find the optimal solution of a previously open instance from the DIMACS benchmark set for vertex coloring problems. %
\end{abstract}

\section{Introduction}
\label{sec:intro}

Given a graph $G=(V,E)$, the vertex coloring problem searches for the assignment of colors to the vertices, such that no two adjacent vertices get the same color and the number of colors is minimized. 
The minimum number of colors needed is called the \emph{chromatic number} of $G$. 
Finding the chromatic number is known to be NP-hard \cite{Garey1979}. 
Due to the wide range of applications like register allocation, scheduling, frequency assignment and timetabling problems, there is a vast amount of literature on this problem. 
For surveys, see for example \cite{Husfeldt2015,Lewis2016,Lima2018ExactAF,Malaguti2010}. 

The exact algorithms for the \vcp can be divided into
dynamic programming approaches~\cite{byskov2002chromatic,Christofides1971,Epp03,Lawler1976ANO},
branch-and-bound based enumeration~\cite{Brlaz1979NewMT,Brown1972ChromaticSA,CornailGraham1973,McDiarmid1979DeterminingTC,Segundo2012,Sewell1993AnIA,zykov1949some}, 
and integer linear programming approaches~\cite{Campelo2008,Campos2015,Hansen2009,jabrayilov2018new,Malaguti2011,MehTri96,MenZab06,MenZab08}. 
The dynamic programming algorithms require exponential space of $O(2^{|V|})$ and are of theoretical interest only. 
In contrast, the branch-and-bound algorithms require only polynomial space and are applicable to graphs with up to 80 vertices \cite{Segundo2012}. 
However, the most efficient algorithms for large instances are ILP-based algorithms. 
State-of-the-art ILP models are the \emph{assignment} based model \cite{MenZab06,MenZab08},
the \emph{partial-ordering} based model \cite{jabrayilov2018new}, 
the \emph{representatives} model \cite{Campelo2004,Campelo2008,Campos2015}, 
the set covering and set partitioning models \cite{MehTri96,Malaguti2011,Hansen2009},
and the \emph{ordered binary decision diagrams~(OBDD)} based model \cite{VCPOBDD2022}.

These models can be divided into two groups: \emph{explicit coloring models} and \emph{independent set based models}. %
\emph{Explicit coloring models} directly create a coloring function $c\colon V \rightarrow \mathbb N$. 
Prominent members of this class are the \emph{assignment} and the \emph{\po based models}~\cite{MenZab06,jabrayilov2018new}.
Both models have $O(|V| H)$ variables and $O(|E| H)$ constraints, where $H$ is an upper bound for the chromatic number of the graph~\cite{MenZab06,jabrayilov2018new}. 
If the graph is sparse, then $H$ is expected to be a small number. 
Hence these two models are best suited for sparse graphs.

\emph{Independent set based models} partition the vertex set of the given graph so that every part induces an independent set and thus corresponds to a color class. The models include the representatives model~\cite{Campelo2004}, the set covering and set partitioning models~\cite{MehTri96,Malaguti2011,Hansen2009}, and the \emph{reduced ordered binary decision diagram} approach.
The representatives model~\cite{Campelo2004} has $O(|V| + |\bar E|)$ %
variables and $O(|V| + |V||E|)$ constraints, where $\bar E$ is the set of edges of the complement graph. As the density increases, the number $|\bar E|$ and hence the number of variables decreases. Obviously, this model is particularly suitable for dense graphs, which was also confirmed in practical experiments (see, e.g., \cite{jabrayilov2018new}).
The set covering and set partitioning formulations involve a variable for each independent set and are best suited for dense graphs. 
Unlike the first three models, these models have an exponential number of variables and therefore need to be solved with sophisticated branch-and-price algorithms.
In contrast to the previous models, the decision diagrams approach first transforms the vertex coloring problem for graph $G$ into a reduced ordered binary decision diagram (OBDD), and then computes a constrained flow on this OBDD by solving a network flow based ILP formulation. %
The reduced OBDD is a \emph{directed acyclic graph (DAG)} with exactly one source $s$ and one sink $t$ and contains a directed $(s,t)$-path for each independent set in $G$.
This model is also particularly suitable for dense graphs, since the number of independent sets, and thus the number of $(s,t)$-paths in the OBDD, becomes smaller with increasing density.

A typical solution process of such an ILP model first solves the LP relaxation, i.e., 
the corresponding linear program (LP)  in which the integer requirements are dropped. %
This leads to a first lower bound for the optimal solution value of the integer linear program.
If the LP values of the variables are integral, we have found an integer solution which coincides with the lower bound.
Otherwise, we start a branch-and-bound possibly combined with a cutting plane approach and/or a column generation (pricing) approach.
In the course of the branch-and-bound, the gap between the lower and the upper bounds decreases.
If the lower and upper bounds coincide, we have found an optimal solution for our ILP.

Theoretical and practical analyses have shown that the lower bounds achieved by the assignment 
and the \po based approaches are weaker (i.e., lower) than
those of the models based on independent sets such as the representatives model, the set-covering model, and the OBDD based model.
However, for sparse graphs $G$, the chromatic number $\chi(G)$ and also the gap between the lower bound and $\chi(G)$ 
is generally small.
This observation jointly with the fact that the number of variables and constraints of the assignment and \po ILP models is much smaller than those of the independent set based models, justifies the efficiency of the aforementioned models on sparse graphs.

\medskip

\noindent\textit{Our contribution.}
In this paper, we consider the \po based ILP model. This has recently attracted interest by researchers who applied the \po model to problems related to the \vcp such as the  
\emph{Filter Partitioning Minimization Problem}~\cite{Rahmani2020IntegerLP}, 
the \emph{Multi-Page Labeling Problem}~\cite{gedicke2021}, 
and the \emph{Equitable Chromatic Number Problem}~\cite{olariu2021improving}, 
in which the task is to color the vertices so that adjacent vertices get different colors and the size of the color classes is approximately the same.

\begin{itemize}
\item We propose to add some \emph{strengthening constraints} to the \po model and prove that their addition improves the lower bound of the corresponding LP relaxation. 
\item Our polyhedral investigations show that the strenghtened \po model leads to stronger lower bounds of the LP relaxations in comparison to the assignment model.
\item We present computational experiments that show that our strengthened \po model also leads to practical improvements.
The computational results also include an open DIMACS~\cite{DIMACS:VCP} instance, %
which is solved for the first time (up to our knowledge). %
\item Our experimental comparison with state-of-the-art approaches on the DIMACS benchmark set shows that, as expected, the assignment and the \po based approaches outperform the set-covering and OBDD based models on sparse graphs.
\end{itemize}

\noindent\textit{Outline.}
\autoref{sec:notations} provides the basic notations. %
(\autoref{sec:notations}). 
We describe the assignment and the partial-ordering based ILP models as well as their extensions in \autoref{sec:description:asspop}.
The theoretical analysis of the corresponding polytopes and lower bounds is given in  \autoref{sec:theoreticalresults}.
\autoref{sec:evaluation} describes and discusses the computational experiments.

\section{Notations} 
\label{sec:notations}
We denote the set $\{1,\dots,k\}$ by $[k]$ .
For a graph $G=(V,E)$, we denote its vertex set by $V(G)$ and its edge set by $E(G)$.
For clarity, we may write the edge $e=\{u,v\}$  for $u,v\in V$ as $e=uv$.
By $N(v):=\{u\in V \mid uv \in E\}$ we denote the set of neighbors of $v$.
A \emph{walk} $W$ in a graph is a sequence $W=v_0,e_1,v_1,\dots,v_{k-1},e_{k},v_k$ of not necessarily distinct vertices and edges, such that $e_i=v_{i-1}v_i$ for $i \in [k]$. %
A walk is a \emph{(simple) path} if all its vertices are distinct. 
We may describe $W$ as the sequence of its vertices $v_0,\ldots,v_{k}$ or edges $e_1,\ldots,e_k$.
 A walk $W$ is called \emph{cycle} if $v_0=v_k$.
A cycle $C:=e_1,\dots,e_k$ is called \emph{odd} if $k$ is odd.

\section{Assignment and partial-ordering based models} \label{sec:description:asspop}
In the following we assume that the graph $G=(V,E)$ is connected and has at least one vertex.
Obviously, %
the problem can be solved separately on each connected component.

\subsection{Assignment-based ILP models}
The idea of the assignment based model for the vertex coloring problem is very natural:
namely, the model assigns color $i$ to vertex $v\in V$. 
For this, it introduces \emph{assignment variables} $x_{vi}$ for each vertex $v\in V$ and color $i \in [H]$, with 
$x_{vi}=1$ if vertex $v$ is assigned to color $i$ and $x_{vi}=0$ otherwise. 
$H$ is an upper bound of the chromatic number of $G$ (\eg, the solution value of some heuristic for the problem) and is bounded by $|V|$.
For modelling the objective function, additional binary variables $w_i$ for $i\in [H]$ are needed, which get value 1 if and only if color $i$ is used in the coloring. 
The model is given by:

\begin{eqnarray}
  \vcpass: \hs
  & \min \ \msum_{1 \le i \le H} w_{i}   \notag \\ %
  & \SubjectTo\notag \\ 
  &  \msum_{i=1}^H \a_{vi}   =  1	& \Forall v \in V \label{unambiguous}\\
  & \a_{ui} + \a_{vi}   \le  w_i	& \Forall uv \in E,\ i=1,\ldots,H \label{edge:coloring}\\
  & w_i \le \msum_{v\in V} x_{vi}	&   i =1,\ldots, H\label{mz:cp1}\\
  & w_{i} \le w_{i-1}			&   i =2,\ldots, H  \label{mz:cp2} \\
  & \a_{vi}, w_{i}    \in   \{0,1\}	& \Forall v \in V,\ i =1,\ldots, H  \label{binary-ass}\
\end{eqnarray}
The objective function minimizes the number of used colors. Equation (\ref{unambiguous}) ensures that each vertex receives exactly one color. 
For each edge $uv \in E$ %
there is a set of constraints (\ref{edge:coloring}) making sure that adjacent vertices receive different colors.
These constraints at the same time ensure that a color $i$ can be assigned only to one of the vertices if the variable $w_i$ is set to 1.
  The assignment model with constraints (\ref{unambiguous}), (\ref{edge:coloring}), and  (\ref{binary-ass}) has the advantage that it is simple and easy to use. It can be easily extended to generalizations and/or restricted variants of the graph coloring problem. 
A main drawback of the model is that there are $\binom H {\chi}$ possibilities to select $\chi$ from $H$ colors~\cite{Malaguti2010}. This results in many optimal solutions that are symmetric to each other, 
which is a big problem for branch-and-bound approaches.
In order to overcome this symmetry, Mendez-Diaz and Zabala \cite{MenZab06,MenZab08} have suggested
to add the constraints (\ref{mz:cp1}) and (\ref{mz:cp2}).

\subsection{\Po based ILP models}
Instead of directly assigning a color to the vertices, the \po based model deter\-mines the partial order $(V \cup [H], \prec)$ of the union of the vertex set $V$ and the set $[H]$ of ordered colors, where $H$ is an upper bound for the chromatic number $\chi(G)$. 
This model has two binary variables $l_{v,i}$ and $g_{i,v}$ for each vertex $v\in V$ and each color $i\in [H]$, where 
$l_{v,i}=1$ iff $v \prec i$ (\ie, the color of $v$ is smaller than $i$), and $g_{i,v}=1$ iff $i \prec v$ (the color of $v$ is larger than $i$).
These variables have the following connection with the assignment variables.
Vertex $v$ has color~$i$ (\ie, $x_{v,i}=1$) if and only if $i$ and $v$ are not ordered in $\prec$, \ie, we have $v \not\prec i$ and $i \not\prec v$ ($l_{v,i}=g_{i,v}=0$), thus we have:
\begin{eqnarray}
    \a_{v,i} = 1- (l_{v,i}+g_{i,v}) & \Forall v \in V,\ i = 1,\dots,H.
    \label{connection:asspop}
\end{eqnarray}
The \pop model~\cite{jabrayilov2018new} has the following form, where $q$ is an arbitrary vertex chosen from $V$:
\begin{eqnarray}
  \vcppop: 
    & \min\ 1 + \msum_{1 \le i \le H} g_{i,q}    \notag \\
    & \SubjectTo\notag 
    \\ 
    & l_{v,1} = g_{H,v}      = 0  & \Forall v \in V  
    \label{vcp:VcpPopConstrPositionsStartEnd}    
    \\
    & g_{i-1,v} - g_{i,v}  \ge 0  & \Forall\ v \in V,\ i \in 2,\dots,H  
    \label{vcp:VcpPopConstrUniqPositions1}    
    \\
    & g_{i-1,v} + l_{v,i}    = 1  & \Forall\ v \in V,\ i \in 2,\dots,H 
    \label{vcp:VcpPopConstrUniqPositions2}    
    \\
    & (g_{i,u} + l_{u,i}) + (g_{i,v} + l_{v,i}) \ge 1  & \Forall uv \in E,\ i =1,\dots, H
    \label{vcp:VcpPopConstrAdjVertices}    
    \\
    & g_{i,q} - g_{i,v} \ge 0 & \Forall v \in V,\  i =1,\ldots, H    
    \label{vcp:VcpPopConstrHighestVertexQ}     
    \\
    & g_{i,v}, l_{v,i} \in \{0,1\}  & \Forall v \in V,\  i =1, \ldots, H.
\end{eqnarray}

  Constraints 
  (\ref{vcp:VcpPopConstrPositionsStartEnd})--(\ref{vcp:VcpPopConstrUniqPositions2})
   ensure that each vertex receives exactly one color. 
  Every adjacent pair of vertices must receive different colors.
  This is guaranteed by constraints (\ref{vcp:VcpPopConstrAdjVertices}).
  Constraints 
  (\ref{vcp:VcpPopConstrHighestVertexQ})
  enforce that there is no vertex $v \in V$ with $q \prec v$  %
  in the partial order $\prec$, 
  \ie, the chosen vertex $q$ has the largest used color.

\subsubsection{Strengthening the \vcppop model}
Here we provide additional constraints that strengthen the LP relaxations of the models, i.e., lead to improved lower bounds.

Using equations (\ref{vcp:VcpPopConstrPositionsStartEnd}) and (\ref{vcp:VcpPopConstrUniqPositions2}), one can eliminate all the $l$-variables \cite{jabrayilov2018new}.
Setting these equations to (\ref{vcp:VcpPopConstrAdjVertices}) yields:
\begin{align}
    & g_{1,u} + g_{1,v} \ge 1					    &\Forall uv \in E, \label{vcp:edge1} \\ %
    & (g_{i-1,u} - g_{i,u}) + (g_{i-1,v} - g_{i,v}) \le 1	    &\Forall uv \in E,\ i =2,\ldots, H. \label{vcp:edge2} %
\end{align}

In the following, we show how to strengthen constraints (\ref{vcp:edge1}) and (\ref{vcp:edge2}).
Similar to the assignment model's constraints (\ref{edge:coloring}) we can tighten %
the right hand side of (\ref{vcp:edge1}) and (\ref{vcp:edge2}) by replacing the constant by a variable as follows. %
Constraint (\ref{vcp:edge1}) forces either $g_{1,u}=1$ or $g_{1,v}=1$, and from  (\ref{vcp:VcpPopConstrHighestVertexQ}) we get $g_{1,q}=1$. 
Thus we can replace (\ref{vcp:edge1}) by %
\begin{align}
    & g_{1,u} + g_{1,v} \ge 2-g_{1,q}  &\Forall uv \in E. 
    \label{vcp:edge1s}
\end{align}
If the left hand side of (\ref{vcp:edge2}) is equal to 1, then either $g_{i-1,u}=1$ or $g_{i-1,v}=1$. 
Then $g_{i-1,q}=1$ by (\ref{vcp:VcpPopConstrHighestVertexQ}). 
Thus we can tighten (\ref{vcp:edge2}) as follows:
\begin{align}
    (g_{i-1,u} - g_{i,u}) + (g_{i-1,v} - g_{i,v}) \le g_{i-1,q}  
    && \Forall uv \in E,\ i =2,\dots, H. 
    \label{vcp:edge2s} 
\end{align}
We denote the tightened program by $\vcppopI$:

\begin{eqnarray}
    \vcppopI: 
    & \min\ 1 + \msum_{1 \le i \le H} g_{i,q}    \notag \\
    & \SubjectTo\notag \\
    & g_{H,v} = 0			& \Forall v \in V                    \label{vcp:VcpPopStarRrange} \\
    & g_{i-1,v} - g_{i,v} \ge 0		& \Forall v \in V,\  i =2,\ldots, H  \label{vcp:VcpPopStarUnambiguous1}\\
    & g_{1,u} + g_{1,v} \ge 2-g_{1,q}	& \Forall uv \in E                   \label{vcp:VcpPopStarEdge1} \\
    & (g_{i-1,u} - g_{i,u}) + (g_{i-1,v} - g_{i,v}) \le g_{i-1,q}	    
                                        & \Forall uv \in E,\ i =2,\dots, H   \label{vcp:VcpPopStarEdge2} \\
    & g_{i,q} - g_{i,v} \ge 0		& \Forall v \in V,\  i =1,\ldots, H  \label{vcp:VcpPopStarChi}     \\
    & g_{i,v} \in \{0,1\}               & \Forall v \in V,\  i =1, \ldots, H.\label{vcp:VcpPopStarVars}
\end{eqnarray}

Our second model \vcppopII further strengthens the \vcppopI model.
For all $v \in N(q)$, 
$i \prec v$ ($g_{i,v}=1$) implies $i+1 \prec q$ ($g_{i+1,q}=1$). 
This statement can be formulated as follows:
\begin{align}
    & g_{i+1,q} - g_{i,v} \ge 0	    & \Forall v \in N(q),\  i =1,\ldots, H-1.    \label{vcp:VcpPop2chiNeighbor} 
\end{align}
We extend \vcppopI by the constraints (\ref{vcp:VcpPop2chiNeighbor}) and denote the resulted ILP by \vcppopII.

\subsection{A hybrid partial-ordering based model for \vcp}
\label{subsec:hybridMIP}

The second model in~\cite{jabrayilov2018new} is a hybrid of the assignment and the \po based models and uses the observation that with growing density the \vcppop constraint matrix contains more non-zero elements than that of \vcpass, since 
the \vcppop constraints (\ref{vcp:VcpPopConstrAdjVertices}) contain twice as many non-zero coefficients as the corresponding \vcpass constraints (\ref{edge:coloring}).
Hence one can substitute (\ref{vcp:VcpPopConstrAdjVertices}) by equalities (\ref{connection:asspop}) and the following constraints:
\begin{align}
  \a_{u,i} + \a_{v,i}   & \le  1  & \Forall uv \in E,\ i =1,\ldots, H.
  \label{vcp:VcpPopHConstrAdjVertices}
\end{align}

\subsubsection{Strengthening the hybrid POP model}
We can strengthen the hybrid model analogous to the pure model \vcppop.
If the graph has an edge, we have %
$g_{1,q}=1$. %
Hence, in case $i=1$,
we can replace (\ref{vcp:VcpPopHConstrAdjVertices}) by 
\begin{align}
    \a_{u,1} + \a_{v,1}   & \le  g_{1,q}  & \Forall uv \in E.
    \label{vcp:VcpPopHConstrAdjVerticesI1}
\end{align}

If the color of a vertex $v$ is $i$ (\ie, $\a_{v,i}=1$) for $i \ge 2$, then its color is larger than $i-1$ (\ie, $g_{i-1,v}=1$).
It follows that in the case $\a_{u,i}=1$ or $\a_{v,i}=1$, we have $g_{i-1,u}=1$ or $g_{i-1,v}=1$, and thus 
$g_{i-1,q}=1$ by (\ref{vcp:VcpPopConstrHighestVertexQ}). 
Therefore, we can replace (\ref{vcp:VcpPopHConstrAdjVertices}) by 
\begin{align}
    \a_{u,i} + \a_{v,i}   & \le  g_{i-1,q}  & \Forall uv \in E,\ i =2,\ldots, H.
    \label{vcp:VcpPopHConstrAdjVerticesI2}
\end{align}

Using equations (\ref{vcp:VcpPopConstrPositionsStartEnd}) and (\ref{vcp:VcpPopConstrUniqPositions2}), we eliminate all the $l$-variables. 
This elimination transforms the equalities (\ref{connection:asspop}) into the following form:
\begin{align*}
    &\a_{v,1} = 1         - g_{1,v}	&& \Forall v \in V,		    \\ %
    &\a_{v,i} = g_{i-1,v} - g_{i,v}	&& \Forall v \in V,\ i = 2,\dots,H.    %
\end{align*}
Notice that these constraints jointly with constraints $\a_{v,i} \in [0,1]$ imply the inequalities (\ref{vcp:VcpPopConstrUniqPositions1}). 
Therefore, constraints (\ref{vcp:VcpPopConstrUniqPositions1}) are redundant in the hybrid model.
We denote the tightened hybrid \po program by \vcppophI.
\begin{eqnarray}
    \vcppophI: 
    & \min\ 1 + \msum_{1 \le i \le H} g_{i,q}    \notag
    \\
    & \SubjectTo\notag \\
    & g_{H,v} = 0			& \Forall v \in V                   \label{vcp:VcpPopHStarRrange} \\
    & \a_{v,1} = 1         - g_{1,v}	& \Forall v \in V		    \label{vcp:VcpPopHConnectionAssPop1} \\
    & \a_{v,i} = g_{i-1,v} - g_{i,v}	& \Forall v \in V,\ i = 2,\dots,H   \label{vcp:VcpPopHConnectionAssPop2} \\
    & \a_{u,1} + \a_{v,1} \le  g_{1,q}  & \Forall uv \in E                  \label{vcp:VcpPopHStarEdge1} \\
    & \a_{u,i} + \a_{v,i} \le g_{i-1,q} & \Forall uv \in E,\ i =2,\ldots,H  \label{vcp:VcpPopHStarEdge2} \\
    & g_{i,q} - g_{i,v} \ge 0		& \Forall v \in V,\  i =1,\ldots,H  \label{vcp:VcpPopHStarChi}     \\
    & \a_{v,i}, g_{i,v} \in \{0,1\}     & \Forall v \in V,\  i =1,\ldots,H. \label{vcp:VcpPopHStarVars}
\end{eqnarray}
Similar to the pure pop model \vcppopII, we also generate a hybrid model \vcppophII, which extends \vcppophI by the constraints (\ref{vcp:VcpPop2chiNeighbor}).

\begin{remark}[non-zero coefficients] \label{vcp:PopVsPoph}
Simply counting the non-zero coefficients in the constraint matrix of the pure and the hybrid \pop models gives $(6|V| +5 |E|) \cdot H -|V|-2|E|$ and $(9|V| +3|E|)\cdot H$, respectively. 
This means that the constraint matrix of the hybrid model becomes sparser than that of the pure model with increasing $|E|$.
\end{remark}

\section{Polyhedral Results} 
\label{sec:theoreticalresults}
In this section we show the polyhedral advantages of the \po based models over the assignment model for the \vcp.
In particular, we study the LP relaxations of the models POP1 and POP2. 
By solving such an LP relaxation, we obtain a fractional solution.
The solution value of these relaxations is also called the \emph{dual bound} to the optimal solution value of the original ILP model.

We first consider the dual bounds of the \po based models \vcppopI and \vcppopII and show that they are larger than two for all connected graphs with $\chi(G)\ge 3$.

\subsection{Dual Bound of \vcppopI}

\begin{lemma}
  Let $C:=q,v_1,v_2,\ldots,v_{2k},q$ {with $k\ge 1$} be an odd cycle in $G$, where $q$ is the chosen vertex  in the \po based models. Then:
  \begin{eqnarray*}
     \val {\vcppopI} \ge 
	\left\{
	    \begin{array}[2]{ll}
		    2+ \frac13         & k=1\\
		    2 + \frac 1 {k+1}  & k\ge2.
	    \end{array}
	\right.
  \end{eqnarray*}
  \label{vcp:lemmaOddCyclePopI}
\end{lemma}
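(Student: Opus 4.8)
The plan is to bound the LP relaxation of $\vcppopI$ from below at an arbitrary feasible (fractional) point. Since the objective is $1+\msum_{i=1}^H g_{i,q}$, I set $T:=\msum_{i=1}^H g_{i,q}$, so it suffices to prove $T\ge\frac43$ for $k=1$ and $T\ge\frac{k+2}{k+1}$ for $k\ge2$. First I would record two elementary consequences of the constraints. Summing the edge inequalities (\ref{vcp:VcpPopStarEdge2}) for a fixed edge $uv$ over all colors $i=2,\dots,H$ makes the left-hand side telescope, and with $g_{H,\cdot}=0$ from (\ref{vcp:VcpPopStarRrange}) the right-hand side collapses to $T$; this yields the \emph{upper} bound $g_{1,u}+g_{1,v}\le T$ for every edge $uv\in E$. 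On the other hand (\ref{vcp:VcpPopStarEdge1}) gives the matching \emph{lower} bound $g_{1,u}+g_{1,v}\ge 2-g_{1,q}$. Applying both on a single edge already yields $T\ge 2-g_{1,q}$.

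This inequality is by itself too weak, since $g_{1,q}$ may be as large as $1$; worse, any straightforward aggregation of the edge inequalities around the cycle collapses back to exactly $T\ge 2-g_{1,q}$ and holds verbatim for even cycles, so it cannot see the parity. The key step is therefore to combine the two families \emph{alternately} around the odd cycle. Writing $C$ as $q=u_0,u_1,\dots,u_{2k},u_{2k+1}=q$ with edges $e_j=u_{j-1}u_j$, I would apply the upper bound $g_{1,u_{j-1}}+g_{1,u_j}\le T$ to the $k+1$ odd-indexed edges, apply the lower bound $g_{1,u_{j-1}}+g_{1,u_j}\ge 2-g_{1,q}$ to the $k$ even-indexed edges, and subtract the second sum from the first. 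Every vertex other than $q$ lies on two edges of opposite sign, so its contribution cancels; because the cycle has an \emph{odd} number of edges, the first and last edges $e_1,e_{2k+1}$ are both odd-indexed and meet at $q$, so the only surviving term is $g_{1,q}$ with coefficient $2$. This gives $2g_{1,q}\le (k+1)T-k(2-g_{1,q})$, that is,
\[
  (k+1)\,T\ \ge\ 2k+(2-k)\,g_{1,q}.
\]

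To conclude, for $k\ge2$ the coefficient $2-k$ is nonpositive, so over $g_{1,q}\in[0,1]$ (the relaxation of (\ref{vcp:VcpPopStarVars})) the right-hand side is minimized at $g_{1,q}=1$, giving $T\ge\frac{k+2}{k+1}$; hence every feasible point satisfies $1+T\ge 2+\frac1{k+1}$ and $\val{\vcppopI}\ge 2+\frac1{k+1}$. For the triangle $k=1$ the displayed inequality only reads $T\ge 1+\frac{g_{1,q}}2$, which I would combine with $T\ge 2-g_{1,q}$ from the first paragraph: the two lines meet at $g_{1,q}=\frac23$, where both equal $\frac43$, so $T\ge\frac43$ and $\val{\vcppopI}\ge 2+\frac13$.

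I expect the only real difficulty to be \emph{discovering} this alternating combination rather than verifying it. The parity of the cycle enters solely through the fact that $e_1$ and $e_{2k+1}$ receive the same sign and are incident to $q$, which is precisely what leaves a nonzero multiple of $g_{1,q}$ behind; for an even cycle the identical computation cancels $g_{1,q}$ as well and recovers only $T\ge 2-g_{1,q}$, consistent with even cycles being $2$-colorable. I would still double-check the telescoping at the last color $i=H$ and the wrap-around indexing, but notably no further model constraints are needed---in particular none of the ``$q$ is the maximum'' inequalities (\ref{vcp:VcpPopStarChi}).
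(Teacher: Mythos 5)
Your proof is correct. For $k\ge 2$ it coincides in substance with the paper's argument: the paper likewise telescopes (\ref{vcp:VcpPopStarEdge2}) over $i=2,\dots,H$ (using (\ref{vcp:VcpPopStarRrange})) to obtain $g_{1,u}+g_{1,v}\le \msum_{i=1}^{H} g_{i,q}$ for every edge $uv$, applies this bound to the $k+1$ cycle edges $qv_1$, $v_{2j}v_{2j+1}$, $qv_{2k}$, applies (\ref{vcp:VcpPopStarEdge1}) to the $k$ edges $v_{2j-1}v_{2j}$, and after the interior vertices cancel it arrives at exactly your inequality $(k+1)\msum_{i=1}^{H} g_{i,q}\ge 2k+(2-k)\,g_{1,q}$, concluded with $g_{1,q}\le 1$. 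The only real divergence is the triangle case $k=1$: there the paper's aggregation invokes the ``$q$ has the largest color'' constraints (\ref{vcp:VcpPopStarChi}) (with weight $\frac16$, for $v_1$), whereas you finish with a convex combination (weights $\frac23$ and $\frac13$) of your alternating bound $T\ge 1+\frac12 g_{1,q}$ and the single-edge bound $T\ge 2-g_{1,q}$, both of which use only (\ref{vcp:VcpPopStarRrange}), (\ref{vcp:VcpPopStarEdge1}) and (\ref{vcp:VcpPopStarEdge2}). Your finish is a small but genuine sharpening in economy: it handles both cases by one alternating scheme, needs no bound on $g_{1,q}$ in the triangle case (the two lines cross at $g_{1,q}=\frac23$ with common value $\frac43$), and shows that the constraints (\ref{vcp:VcpPopStarChi}) are not needed for this lemma at all --- a fact not visible from the paper's proof.
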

\begin{proof}
Adding (\ref{vcp:VcpPopStarEdge2}) for an edge $uv$ and each $i=2,\dots,H$ gives: 
\begin{eqnarray*}
  \msum_{i=1}^{H-1} g_{i,q} \ge g_{1,u} + g_{1,v} - g_{H,u} - g_{H,v}	&& \Forall uv \in E. 
\end{eqnarray*}
As $g_{H,q} = g_{H,u} = g_{H,v}= 0$ by (\ref{vcp:VcpPopStarRrange}), we can rewrite these inequalities as follows:
\begin{eqnarray}
  \msum_{i=1}^{H} 
  g_{i,q} \ge g_{1,u} + g_{1,v}	    &&\Forall uv \in E. 
  \label{vcp:lemmaOddCyclePopIx1}
\end{eqnarray}

\begin{description}
  \item[Case $k=1$:] $C$ has exactly three vertices, \ie, $C:=q,v_1,v_2,q$.
  Adding the inequalities (\ref{vcp:lemmaOddCyclePopIx1}) for edges $qv_1,qv_2 \in E$ gives:

  \begin{eqnarray}
    2\msum_{i=1}^{H} g_{i,q} \ge 2g_{1,q} + g_{1,v_1} + g_{1,v_2}.
    \label{vcp:lemmaOddCyclePopIx2}
  \end{eqnarray}

  The sum of the constraints 
  \begin{align*}
    \msum_{i=1}^{H} g_{i,q}  &\ge g_{1,q} + \frac12 g_{1,v_1} + \frac12 g_{1,v_2}    && //\ \frac12 \times (\ref{vcp:lemmaOddCyclePopIx2}) 
    \\
                          0  &\ge \frac12(2-g_{1,q} -g_{1,v_1} -g_{1,v_2})           && //\ \frac12 \times (\ref{vcp:VcpPopStarEdge1}) \mbox{ for edge $v_1v_2\in E$}
    \\
                          0  &\ge \frac16(2-g_{1,q} -g_{1,q} -g_{1,v_1})             && //\ \frac16 \times (\ref{vcp:VcpPopStarEdge1}) \mbox{ for edge $qv_1\in E$}
    \\
                          0  &\ge \frac16 (-g_{1,q} +g_{1,v_1})                      && //\ \frac16 \times (\ref{vcp:VcpPopStarChi}) \mbox{ for $v_1\in V$}
  \end{align*}

  imply $\sum_{i=1}^{H} g_{i,q} \ge 1\frac 13$. It follows:
  \begin{align*}
     \val\vcppopI = 1+\msum_{i=1}^{H} g_{i,q} & \ge 2\frac 13.
  \end{align*}

\medskip
  \item[Case $k\ge2$:]
Adding (\ref{vcp:VcpPopStarEdge1}) for edges $v_{2j-1}v_{2j}$ with $j=1,\dots,k$ yields:
\begin{align}
      \msum_{j=1}^{k} ( g_{1,v_{2j-1}} + g_{1,v_{2j}} ) \ge k(2 - g_{1,q}).
      \label{vcp:lemmaOddCyclePopIx3}
\end{align}
Adding (\ref{vcp:lemmaOddCyclePopIx1}) for each edge $v_{2j}v_{2j+1}$ with $j=1,\dots,k-1$ yields:

\begin{eqnarray}
  \msum_{j=1}^{k-1} 
  \msum_{i=1}^{H} g_{i,q} 
  \ge
  \msum_{j=1}^{k-1} ( g_{1,v_{2j}} + g_{1,v_{2j+1}} ).
  \label{vcp:lemmaOddCyclePopIx4}
\end{eqnarray}

Adding (\ref{vcp:lemmaOddCyclePopIx3}) and (\ref{vcp:lemmaOddCyclePopIx4}) leads to: 

\begin{eqnarray}
  &
  \msum_{j=1}^{k-1} 
  \msum_{i=1}^{H} g_{i,q} 
  & \ge 
    \msum_{j=1}^{k-1} ( g_{1,v_{2j}} + g_{1,v_{2j+1}} )
     - \msum_{j=1}^{k} ( g_{1,v_{2j-1}} + g_{1,v_{2j}} ) 
     + k(2-g_{1,q})  \notag
  \\
  && \displaystyle
   = k(2-g_{1,q}) - g_{1, v_1}- g_{1, v_{2k}}.
  \label{vcp:lemmaOddCyclePopIx5}
\end{eqnarray}

The sum of the constraints 

\begin{align*}
  \msum_{j=1}^{k-1} \msum_{i=1}^{H} g_{i,q} &\ge k(2-g_{1,q}) - g_{1, v_1}- g_{1, v_{2k}} && //\ (\ref{vcp:lemmaOddCyclePopIx5})
  \\
  \msum_{i=1}^{H} g_{i,q}                  &\ge g_{1,q} + g_{1,v_1}                        && //\ (\ref{vcp:lemmaOddCyclePopIx1}) \mbox{ for edge } qv_1\in E 
  \\
  \msum_{i=1}^{H} g_{i,q}                  &\ge g_{1,q} + g_{1,v_{2k}}                        && //\ (\ref{vcp:lemmaOddCyclePopIx1}) \mbox{ for edge } qv_{2k}\in E 
\end{align*}
gives:
\begin{align*}
    (k+1) \msum_{i=1}^{H} g_{i,q} 
      & 
      \ge k(2-g_{1,q}) + 2 g_{1,q}
      \ge 2k - g_{1,q}(k-2)
      \\
      & \ge 2k - (k-2)    & //\quad g_{1,q} \le 1
      \\
      & = k +2.
\end{align*}

It follows $\sum_{i=1}^{H} g_{i,q} \ge \frac{k+2}{k+1}$.
Thus we have: %
\begin{align*}
  \val\vcppopI = 1 + \msum_{i=1}^{H} g_{i,q} 
    & \ge 1+ \frac {k+2} {k +1}
    = 2 + \frac 1 {k +1}.
\end{align*}
\end{description}
\end{proof}
\begin{theorem} \label{vcp:TheoremVcpPopILPbound}
  Let $G$ be a graph with $\chi(G)>2$. Then the improved model \vcppopI satisfies: %
  \begin{align*}
    \val\vcppopI \ge 2 + \frac 1 {|V|} >2.
  \end{align*}
\end{theorem}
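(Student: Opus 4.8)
The plan is to reduce the theorem to \autoref{vcp:lemmaOddCyclePopI} by exhibiting a sufficiently short odd structure through the distinguished vertex $q$. Since $\chi(G) > 2$, the graph $G$ is not bipartite and therefore contains an odd cycle. The obstacle is that \autoref{vcp:lemmaOddCyclePopI} requires the odd cycle to pass through $q$, whereas in general $q$ need not lie on any odd cycle at all (think of $q$ as the endpoint of a long path attached to a triangle). Hence the lemma cannot be applied directly, and the main work is to manufacture an odd object through $q$ whose length is controlled by $|V|$.

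To this end I would first produce a \emph{closed odd walk} through $q$ of bounded length. Running a BFS from $q$ and using that a connected graph is bipartite if and only if no edge joins two vertices in the same BFS level, the non-bipartiteness of $G$ yields an edge $xy$ with $\text{dist}(q,x) = \text{dist}(q,y) = \ell$ for some $\ell \le |V|-1$. Concatenating a shortest $q$--$x$ path, the edge $xy$, and a shortest $y$--$q$ path gives a closed walk $W$ through $q$ of odd length $2\ell+1$, i.e.\ of the form $q, u_1, \ldots, u_{2\ell}, q$ with $\ell \le |V|-1$.

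The second step is the observation that the proof of \autoref{vcp:lemmaOddCyclePopI} never uses that the cycle vertices are distinct: it merely sums the edge inequalities (\ref{vcp:VcpPopStarEdge1}) and (\ref{vcp:lemmaOddCyclePopIx1}) over the alternating edges of the closed structure together with the two edges incident to $q$, and the cancellation leading to (\ref{vcp:lemmaOddCyclePopIx5}) is a purely formal identity in the variables $g_{1,u_j}$ and $g_{1,q}$. Thus the very same computation applies verbatim to $W$ (each traversed edge contributes a genuine valid inequality, counted with multiplicity if repeated), and with $k := \ell$ it yields $\val\vcppopI \ge 2 + \tfrac{1}{\ell+1}$ when $\ell \ge 2$, and $\val\vcppopI \ge 2 + \tfrac13$ when $\ell = 1$ (in which case $W$ is a triangle). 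I expect the only genuine subtlety to be checking that no step of the lemma's argument secretly relies on simplicity of the cycle, so that this extension to walks is legitimate.

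Finally I would assemble the pieces: since $\ell \le |V|-1$ we have $\ell+1 \le |V|$, hence $\tfrac{1}{\ell+1} \ge \tfrac{1}{|V|}$; and because $\chi(G) > 2$ forces $|V| \ge 3$, the triangle case likewise satisfies $\tfrac13 \ge \tfrac{1}{|V|}$. In every case therefore $\val\vcppopI \ge 2 + \tfrac{1}{|V|} > 2$, as claimed.
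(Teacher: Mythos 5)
Your proof is correct, and it reaches the theorem by a genuinely different construction than the paper, although both arguments funnel into the same key lemma (Lemma~\ref{vcp:lemmaOddCyclePopI}). The paper takes an arbitrary odd cycle $C$ on $2k+1$ vertices (which exists since $\chi(G)>2$), connects $q$ to $C$ by a shortest path $P$ meeting $C$ only in its endpoint $w$, and then traverses $q\to w$, around $C$, and back $w\to q$; this closed walk is odd, has at most $2(|V|-k-1)+1$ vertices, and the lemma (in the uniform form $\val\vcppopI \ge 2+\frac1{k'+2}$) yields the bound. You instead run a BFS from $q$ and use non-bipartiteness to find an intra-level edge, producing an odd closed walk $q,u_1,\dots,u_{2\ell},q$ with $\ell\le |V|-1$ directly; the case split $\ell=1$ versus $\ell\ge 2$ then gives $\val\vcppopI \ge 2+\frac1{|V|}$ exactly as in your last paragraph. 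Your worry about applying the lemma to a non-simple closed walk is well placed and your resolution is the right one: the lemma's derivation only sums valid inequalities (\ref{vcp:VcpPopStarEdge1}), (\ref{vcp:lemmaOddCyclePopIx1}), (\ref{vcp:VcpPopStarChi}) over the traversed edges (with multiplicity if an edge repeats), and the telescoping cancellation is a formal identity in the indices, so simplicity is never used. Note that the paper needs this very same fact---its path--cycle--path object is also a non-simple closed walk---but it sidesteps the issue definitionally, since its Notation section declares a ``cycle'' to be any closed walk with not necessarily distinct vertices; under that convention the lemma as stated already covers both constructions. What your route buys is a self-contained and slightly cleaner walk construction (BFS parity gives the length bound $\ell\le|V|-1$, in fact $\ell\le|V|-2$, immediately); what the paper's route buys is that it reuses the textbook fact ``$\chi>2$ implies an odd cycle exists'' without invoking the BFS characterization of bipartiteness. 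One cosmetic remark: the paper's proof cites Lemma~\ref{lemma:oddcycle} (the \vcppopII lemma) where it plainly means Lemma~\ref{vcp:lemmaOddCyclePopI}; your write-up avoids this slip by citing the \vcppopI lemma throughout.
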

\begin{proof}
From $\chi(G)>2$ it
follows that $G$ contains an odd cycle $C$ with $2k+1$ vertices for some $k \ge 1$. 
Since $G$ is connected, there is a shortest simple path $P$ that starts at $q$ and ends at a vertex $w$ of $C$ such that $V(P) \cap V(C)=\{w\}$.
Then the path $q,\dots,w$, the cycle $C$, and the path $w,\dots,q$ build an odd cycle with at most 
\begin{align*}
  (|V|-2k-1) + (2k+1) + (|V|-2k-1) =2(|V|-k-1)+1
\end{align*}
vertices. Let $k':=(|V|-k-1)$. By Lemma \ref{lemma:oddcycle} we have:
  \begin{align*}
    \val\vcppopI \ge 2 + \frac 1 {k'+2} = 2 + \frac 1 {(|V|-k-1)+2} = 2 + \frac 1 {|V|-k+1} \ge 2 + \frac 1 {|V|},
  \end{align*}
where the last inequality follows from $k\ge1$. %
\end{proof}

\subsection{Dual Bound of \vcppopII}

\begin{lemma}
  Let $C:=q,v_1,v_2,\ldots,v_{2k},q$ with $k\ge 1$ be an odd cycle in $G$, where $q$ is the chosen vertex  in the \po based models. Then:
   \begin{align*}
       \val\vcppopII \ge 2 + \frac 1 {k+1}.
   \end{align*}
  \label{lemma:oddcycle}
\end{lemma}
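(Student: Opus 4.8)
The plan is to leverage that \vcppopII is obtained from \vcppopI by \emph{adding} the neighbor constraints (\ref{vcp:VcpPop2chiNeighbor}); since adding constraints can only shrink the feasible region of the LP relaxation, one always has $\val\vcppopII \ge \val\vcppopI$. For every $k \ge 2$ the target value $2 + \frac{1}{k+1}$ is therefore already supplied by Lemma \ref{vcp:lemmaOddCyclePopI} applied to \vcppopI, and nothing further is required. Consequently all the real work sits in the triangle case $k = 1$, where Lemma \ref{vcp:lemmaOddCyclePopI} only gives $2 + \frac13 < 2 + \frac12$, so the new constraints (\ref{vcp:VcpPop2chiNeighbor}) must be exploited in an essential way.

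So I would fix the triangle $C = q, v_1, v_2, q$, write $S := \msum_{i=1}^{H} g_{i,q}$, and recall that the objective of \vcppopII equals $1 + S$. Because $G$ contains a triangle we have $\chi(G) \ge 3$ and hence $H \ge 3$, so all three variables $g_{1,q}, g_{2,q}, g_{3,q}$ exist and, by non-negativity, $S \ge g_{1,q} + g_{2,q} + g_{3,q}$. It therefore suffices to prove $g_{1,q} + g_{2,q} + g_{3,q} \ge \frac32$, which yields $\val\vcppopII = 1 + S \ge \frac52 = 2 + \frac{1}{k+1}$.

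The heart of the argument is to extract two linear inequalities purely in $g_{1,q}, g_{2,q}, g_{3,q}$. Starting from (\ref{vcp:VcpPopStarEdge1}) for the edge $v_1v_2$, which reads $g_{1,v_1} + g_{1,v_2} \ge 2 - g_{1,q}$, I would first insert the neighbor constraints (\ref{vcp:VcpPop2chiNeighbor}) at level $i = 1$, namely $g_{1,v_1}, g_{1,v_2} \le g_{2,q}$, to obtain $g_{1,q} + 2g_{2,q} \ge 2$. Next, applying (\ref{vcp:VcpPopStarEdge2}) at level $i = 2$ to the same edge gives $g_{1,v_1} + g_{1,v_2} \le g_{1,q} + g_{2,v_1} + g_{2,v_2}$; combining this with the lower bound $2 - g_{1,q}$ and with the neighbor constraints (\ref{vcp:VcpPop2chiNeighbor}) at level $i = 2$, namely $g_{2,v_1}, g_{2,v_2} \le g_{3,q}$, produces $g_{1,q} + g_{3,q} \ge 1$. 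Rewriting the two inequalities as $g_{2,q} \ge 1 - \tfrac12 g_{1,q}$ and $g_{3,q} \ge 1 - g_{1,q}$ and summing with $g_{1,q}$ gives $g_{1,q} + g_{2,q} + g_{3,q} \ge 2 - \tfrac12 g_{1,q} \ge \frac32$, the last step using $g_{1,q} \le 1$.

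The main obstacle is precisely the triangle case, and inside it the non-obvious move is the second inequality: one must apply the neighbor constraints (\ref{vcp:VcpPop2chiNeighbor}) at level $i = 2$ (bounding $g_{2,v_1} + g_{2,v_2}$ by $2g_{3,q}$), rather than only at level $i = 1$ as in the first inequality. A useful sanity check is the degenerate sub-case $H = 3$: there (\ref{vcp:VcpPopStarRrange}) forces $g_{3,q} = 0$, so the second inequality collapses to $g_{1,q} \ge 1$, hence $g_{1,q} = 1$, and the bound still holds; the general derivation above subsumes this, so no separate treatment of $H = 3$ is needed.
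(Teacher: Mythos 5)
Your proof is correct and follows essentially the same route as the paper: the case $k\ge 2$ is dispatched by monotonicity via Lemma~\ref{vcp:lemmaOddCyclePopI}, and the triangle case combines (\ref{vcp:VcpPopStarEdge1}) for $v_1v_2$, (\ref{vcp:VcpPopStarEdge2}) at $i=2$, the neighbor constraints (\ref{vcp:VcpPop2chiNeighbor}), and $g_{1,q}\le 1$ to conclude $\sum_{i=1}^{H} g_{i,q} \ge \frac32$. The only difference is bookkeeping: the paper aggregates (\ref{vcp:VcpPop2chiNeighbor}) over all levels $i=1,\dots,H-1$ and discards the surplus terms $\frac12\sum_{i=3}^{H-1}(g_{i,v_1}+g_{i,v_2})$ at the end, whereas you invoke it only at levels $1$ and $2$ and bound the tail of the objective by non-negativity.
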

\begin{proof}
Recall that \vcppopII contains all constraints of \vcppopI. 
Then for $k\ge2$ we have 
\begin{align*}
   \val\vcppopII \ge \val\vcppopI \ge 2 + \frac 1 {k+1}.
\end{align*}
Thus the claim holds for $k\ge2$. 
So it is sufficient to consider the case $k=1$.
 Then $C$ has three vertices, \ie, $C:=q,v_1,v_2,q$.
 Adding (\ref{vcp:VcpPop2chiNeighbor}) for a neighbor $v \in N(q)$ and each $i=1,\dots,H-1$ yields:
 \begin{eqnarray*}
   \msum_{i=2}^{H} g_{i,q}   \ge	
   \msum_{i=1}^{H-1} g_{1,v}    && \Forall v \in N(q).
 \end{eqnarray*}
 Adding $g_{1,q}$ on both sides of these inequalities yields:
 \begin{eqnarray}
   \msum_{i=1}^{H} g_{i,q}   \ge	g_{1,q} + 
   \msum_{i=1}^{H-1} g_{1,v}	    
   && \Forall v \in N(q).
   \label{vcp:lemmaOddCyclePopIx6}
 \end{eqnarray}
 The sum of constraints (\ref{vcp:lemmaOddCyclePopIx6}) for both neighbors $v_1$ and $v_2$ of $q$ gives:
 \begin{eqnarray}
   2 \msum_{i=1}^{H} g_{i,q}   \ge    2g_{1,q} + 
     \msum_{i=1}^{H-1} g_{i,v_1} + 
     \msum_{i=1}^{H-1} g_{i,v_2}.
   \label{vcp:lemmaOddCyclePopIx7}
 \end{eqnarray}
 The sum of the constraints 
 \begin{align*}
   \msum_{i=1}^{H} g_{i,q}   &\ge g_{1,q} + 
   \frac12 \msum_{i=1}^{H-1} (g_{i,v_1} + g_{i,v_2}) && //\ \frac12 \times (\ref{vcp:lemmaOddCyclePopIx7}) 
   \\
 		        0  &\ge 2-g_{1,q} -g_{1,v_1} -g_{1,v_2}           && //\ 1 \times (\ref{vcp:VcpPopStarEdge1}) \mbox{ for $v_1v_2\in E$}
   \\
 		        0 &\ge \frac12 \big( -g_{1,q} + (g_{1,v_1} - g_{2,v_1}) + (g_{1,v_2} - g_{2,v_2}) \big)
 			&& //\ \frac12 \times (\ref{vcp:VcpPopStarEdge2}) \mbox{ for $v_1v_2\in E$, $i=2$}
 \end{align*}
 imply that: %
 \begin{align*}
   \msum_{i=1}^{H} g_{i,q}   &\ge 2 - \frac12 g_{1,q} + \frac12 \msum_{i=3}^{H-1} (g_{i,v_1} + g_{i,v_2}) 
   \\
                            &\ge 2 - \frac12 g_{1,q} 
   \\
                            &\ge 2 - \frac12	    & //\quad g_{1,q} \le 1
   \\
                            &\ge 1 \frac12.
 \end{align*}
 It follows for $k=1$:
 \begin{align*}
    \val\vcppopII = 1+\msum_{i=1}^{H} g_{i,q} & \ge 2\frac 12 =2 +\frac 1{k+1}.
 \end{align*}
\end{proof}

\subsection{Strength-relationship}
Next, we consider the strength-relationship between the assignment and  the \po based models.  
For polyhedral comparisons we use the term \emph{strength of an LP relaxation}~\cite{Polzin2009}. %
We denote by $\val M$ the value of the LP relaxation of the ILP $M$.
Let $M$ and $M'$ be two ILPs for a minimization problem~$P$.
We say $M$ is stronger than $M'$, if $\val M \ge \val {M'}$ holds for all instances of $P$; 
$M$ is strictly stronger than $M'$ if $M$ is stronger than $M'$, but $M'$ is not stronger than $M$. %

\begin{lemma}
 \vcppopII is strictly stronger than \vcppopI.
  \label{vcp:LemmaPop2Pop1}
\end{lemma}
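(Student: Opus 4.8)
The plan is to establish the two conditions that ``strictly stronger'' unfolds into: first that $\val\vcppopII \ge \val\vcppopI$ on every instance, and second that this inequality is strict on at least one instance.

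The weak direction is immediate from how the models are built. \vcppopII is obtained from \vcppopI by appending the inequalities (\ref{vcp:VcpPop2chiNeighbor}) without introducing any new variables, and both share the objective $1 + \msum_{1 \le i \le H} g_{i,q}$. Hence the LP-feasible region of \vcppopII is a subset of that of \vcppopI, and minimizing the same linear objective over a smaller polytope cannot decrease the optimum; thus $\val\vcppopII \ge \val\vcppopI$ for every graph and every admissible upper bound $H$.

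For strictness I would exhibit a single instance on which the new constraints actually cut away the optimal fractional point of \vcppopI. The convenient witnesses are graphs that contain a triangle $q,v_1,v_2$ through the distinguished vertex $q$, because Lemma \ref{lemma:oddcycle} (case $k=1$) already guarantees $\val\vcppopII \ge 2\tfrac12$ for any such graph, independently of $H$. It then suffices to produce a point feasible for the LP relaxation of \vcppopI whose objective is strictly below $2\tfrac12$, for then $\val\vcppopI < 2\tfrac12 \le \val\vcppopII$. Concretely, with $H \ge 4$ I would set $g_{1,q}=\tfrac45$, $g_{2,q}=\tfrac25$, $g_{3,q}=\tfrac15$ and $g_{1,v_j}=\tfrac35$, $g_{2,v_j}=\tfrac15$ for $j\in\{1,2\}$, with all remaining $g$-entries equal to $0$; a direct check of (\ref{vcp:VcpPopStarRrange})--(\ref{vcp:VcpPopStarChi}) confirms feasibility for \vcppopI, the objective equals $1+\tfrac45+\tfrac25+\tfrac15 = 2\tfrac25$, and the point violates (\ref{vcp:VcpPop2chiNeighbor}) already at $i=1$ since $g_{2,q}=\tfrac25 < \tfrac35 = g_{1,v_1}$.

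The main obstacle is a genuine degeneracy at the smallest color bound: for $H=3$ the two models do not separate. Indeed, the last instance $i=H$ of (\ref{vcp:VcpPopStarEdge2}) on an edge $qv$ forces $g_{H-1,v}=0$ for every neighbor $v$ of $q$, and then the instance $i=2$ of (\ref{vcp:VcpPopStarEdge2}) on $qv$ already yields $g_{1,v}\le g_{2,q}$, i.e.\ exactly the new constraint (\ref{vcp:VcpPop2chiNeighbor}) at $i=1$, while the remaining new constraints become trivial; both relaxations therefore attain $2\tfrac12$ on the bare triangle. The witness must consequently use $H\ge 4$, which forces $|V|\ge 4$ under $H\le|V|$, so I would carry out the construction on a four-vertex graph containing the triangle, such as the \emph{paw} (the triangle together with a pendant vertex attached to $v_1$), and extend the fractional solution above to the pendant vertex while leaving every $g_{i,q}$ unchanged, so that the objective stays at $2\tfrac25$. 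The remaining work is routine verification: that this extension satisfies all edge constraints of \vcppopI, and that Lemma \ref{lemma:oddcycle} still applies to the triangle $q,v_1,v_2$ inside the paw to give $\val\vcppopII\ge 2\tfrac12$, which completes the separation.
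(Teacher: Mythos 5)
Your proof is correct and takes essentially the same route as the paper: the paper also gets the weak direction from constraint containment and separates the models on a four-vertex instance with $H=4$ --- it uses the complete graph $K_4$ on $\{q,a,b,c\}$ rather than your paw --- with exactly the same fractional point $(g_{1,q},g_{2,q},g_{3,q},g_{4,q})=(\tfrac45,\tfrac25,\tfrac15,0)$, $(g_{1,v},g_{2,v},g_{3,v},g_{4,v})=(\tfrac35,\tfrac15,0,0)$ of value $2.4$, against the bound $\val\vcppopII\ge 2.5$ from Lemma~\ref{lemma:oddcycle}. Your extra observation that $H=3$ cannot separate the two relaxations is correct and explains the need for a fourth vertex, but it is not needed in the paper's version, since on $K_4$ the choice $H=4$ is forced by $\chi(K_4)=|V|=4$.
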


\begin{proof}
The LP relaxation of  \vcppopII is stronger than that of $\vcppopI$, %
since the former contains all constraints of the latter.
To show that \vcppopI is not stronger than \vcppopII, it is sufficient to construct a \vcp instance with strict inequality $\val\vcppopII > \val\vcppopI$.
Let $G$ be a complete graph with four vertices $q,a,b,c$, and let $H=4$.
Consider the following feasible solution for the linear program $\vcppopI$:
\begin{align}
(g_{1,q}, g_{2,q}, g_{3,q}, g_{4,q}) &= 
(\nicefrac 45, \nicefrac 25, \nicefrac 15, 0)
\notag
\\
(g_{1,v},g_{2,v},g_{3,v},g_{4,v}) &= 
(\nicefrac 35, \nicefrac 15, 0, 0)
&& \forall v \in V \setminus \{q\}. \notag 
\end{align}
of value $1 + \sum_{i=1}^{H} g_{i,q} =2.4$. 
It follows that $\val\vcppopI \le 2.4$. 
Since $G$ contains the 3-cycle $q,a,b,q$ we have $\val\vcppopII \ge 2.5$ by Lemma \ref{lemma:oddcycle}, so $\val\vcppopII > \val\vcppopI$, as claimed.
\end{proof}

\begin{lemma}
 \vcppopI is strictly stronger than \vcpass.
 \label{vcp:LemmaPop1Ass}
\end{lemma}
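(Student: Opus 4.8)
The plan is to prove the two halves of the statement separately: that $\val{\vcppopI}\ge\val{\vcpass}$ holds for every connected graph, and that the inequality is strict on at least one instance. For the first half I would start from an \emph{optimal} solution $g$ of the LP relaxation of \vcppopI and project it into the variable space of \vcpass via the connection equations, setting $\a_{v,1}=1-g_{1,v}$ and $\a_{v,i}=g_{i-1,v}-g_{i,v}$ for $i\ge 2$, together with $w_1=1$ and $w_i=g_{i-1,q}$ for $i\ge 2$. A telescoping argument together with $g_{H,q}=0$ (constraint (\ref{vcp:VcpPopStarRrange})) gives $\msum_{i=1}^{H}w_i=1+\msum_{i=1}^{H}g_{i,q}$, so this point has objective value exactly $\val{\vcppopI}$; hence it suffices to show that it is feasible for \vcpass, which yields $\val{\vcpass}\le\val{\vcppopI}$.

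Most of the \vcpass constraints fall out directly from the projection. The bounds $\a_{v,i}\in[0,1]$ follow from the monotonicity constraints (\ref{vcp:VcpPopStarUnambiguous1}); equation (\ref{unambiguous}) follows by telescoping $\msum_i \a_{v,i}$ and using $g_{H,v}=0$; the edge inequalities (\ref{edge:coloring}) are, for $i\ge 2$, precisely (\ref{vcp:VcpPopStarEdge2}), while for $i=1$ they reduce to $g_{1,u}+g_{1,v}\ge 1$, which is implied by (\ref{vcp:VcpPopStarEdge1}) and $g_{1,q}\le 1$; finally (\ref{mz:cp2}) follows from the monotonicity of $g_{\cdot,q}$ together with $g_{1,q}\le 1$.

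The hard part is the Mendez--Diaz--Zabala usage constraint (\ref{mz:cp1}), i.e.\ $w_i\le\msum_{v}\a_{v,i}$, which after projection reads $g_{i-1,q}\le\msum_{v}(g_{i-1,v}-g_{i,v})$. This inequality is \emph{not} implied by \vcppopI-feasibility alone: a suboptimal solution may inflate some $g_{i,q}$ or effectively ``skip'' a colour, and one can then check that its projection lies outside the \vcpass polytope. The observation I would exploit is that the objective of \vcppopI penalises exactly this waste: at an optimal $g$ the quantity $g_{i-1,q}$ cannot exceed the total fractional amount $\msum_v(g_{i-1,v}-g_{i,v})$ of vertices whose colour lies around $i$ -- the continuous analogue of the fact that an optimal integral colouring uses a contiguous block $1,\dots,\chi$ with $q$ attaining the largest colour. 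Making this tightness rigorous at the fractional optimum is where I expect the real work to lie; I would try to show that whenever (\ref{mz:cp1}) is violated one can decrease the offending $g_{i-1,q}$ while preserving all \vcppopI constraints, strictly lowering $1+\msum_i g_{i,q}$ and contradicting optimality.

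For the strictness it suffices to exhibit one graph on which $\val{\vcppopI}>\val{\vcpass}$. I would take the triangle $K_3$ with $H=3$: the uniform assignment $\a_{v,i}=\tfrac13$ with $w_1=w_2=w_3=\tfrac23$ is feasible for \vcpass (it satisfies (\ref{unambiguous}), (\ref{edge:coloring}), (\ref{mz:cp1}), and (\ref{mz:cp2})) and has value $2$, so $\val{\vcpass}\le 2$; on the other hand $K_3$ is an odd cycle through $q$ with $k=1$, so $\val{\vcppopI}\ge 2+\tfrac13$ by Lemma~\ref{vcp:lemmaOddCyclePopI} (equivalently $>2$ by Theorem~\ref{vcp:TheoremVcpPopILPbound}). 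Thus \vcpass is not stronger than \vcppopI, which combined with the first half establishes that \vcppopI is strictly stronger than \vcpass.
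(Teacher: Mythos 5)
Your strictness half is fine and essentially coincides with the paper's: an explicit fractional \vcpass solution of value $2$, plus the odd-cycle bound (Lemma~\ref{vcp:lemmaOddCyclePopI}, or Theorem~\ref{vcp:TheoremVcpPopILPbound}) on a triangle. The problem is the first half. Your projection does handle (\ref{unambiguous}), (\ref{edge:coloring}), (\ref{mz:cp2}) and the variable bounds correctly, but — as you yourself flag — it does not establish (\ref{mz:cp1}), and this is a genuine gap, not a routine detail. Moreover, the repair you sketch does not work as stated: if $g_{i-1,q} > \msum_{v}(g_{i-1,v}-g_{i,v})$, simply decreasing $g_{i-1,q}$ is not a feasible move, because $g_{i-1,q}$ also sits on the ``helping'' side of several other constraints: it is the right-hand side of the edge constraints (\ref{vcp:VcpPopStarEdge2}) of index $i$, the upper bound in the constraints (\ref{vcp:VcpPopStarChi}), the left side of the monotonicity constraint $g_{i-1,q}\ge g_{i,q}$, and, for $i=2$, it appears in (\ref{vcp:VcpPopStarEdge1}) as $g_{1,u}+g_{1,v}\ge 2-g_{1,q}$; for an edge $qv$ the index-$(i-1)$ instance of (\ref{vcp:VcpPopStarEdge2}) even reduces to $g_{i-2,v}-g_{i-1,v}\le g_{i-1,q}$. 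So the local exchange can destroy feasibility, and whether \emph{some} optimal \vcppopI solution always projects into the \vcpass polytope is left unproven in your write-up (it would need a genuinely global argument, e.g.\ LP duality or a multi-variable perturbation). As it stands, the direction $\val{\vcppopI}\ge\val{\vcpass}$ is not established.

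The gap is also avoidable, because the paper takes a much cheaper route that never transfers a POP point to an ASS point. It observes that \vcpass has a universal feasible LP solution of value $2$ for every graph~\cite{Malaguti2010}, namely $x_{v,1}=x_{v,2}=\nicefrac12$ and $w_1=w_2=1$, so $\val{\vcpass}\le 2$ always; and that whenever $G$ has an edge $uv$, combining (\ref{vcp:VcpPopStarEdge1}) with the telescoped sum of (\ref{vcp:VcpPopStarEdge2}) (which gives $\msum_{i=1}^{H}g_{i,q}\ge g_{1,u}+g_{1,v}$) yields $\msum_{i=1}^{H}g_{i,q}\ge 2-g_{1,q}\ge 1$, hence $\val{\vcppopI}\ge 2$. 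Together with the trivial single-vertex case this gives $\val{\vcppopI}\ge\val{\vcpass}$ for all instances, and the difficulty you ran into with (\ref{mz:cp1}) never arises: one bounds both LP values against the constant $2$ instead of trying to match them instance by instance. If you want to keep your projection approach, the statement you must actually prove — feasibility of the projected point for (\ref{mz:cp1}) at a fractional optimum — is considerably harder than the lemma itself.
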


\begin{proof}
We first show that \vcppopI is stronger than \vcpass.
If $G$ consists of only one vertex, then both models have the same value 1, so suppose $G$ has some edges.
The model \vcpass has the following feasible solution of value $2$, valid for any graph~\cite{Malaguti2010}:
\begin{align*}
& \Forall v \in V: &&  x_{vi}= \left\{
				    \begin{array}[2]{ll}
					    \nicefrac12 & i \in \{1,2\} \\
					     0	    & i>2, 
				    \end{array}
				    \right.
\\
& \Forall i \in [H]: &&  w_{i}= \left\{
				    \begin{array}[2]{ll}
					    1 & i \in \{1,2\} \\
					    0   & i>2. 
				    \end{array}
				    \right.
\end{align*}
It follows that the objective value of the assignment model, independent of the graph, is at most~2, \ie, $\val\vcpass \le 2$.
On the other hand, if $G$ has an edge, %
the objective value of \vcppopI is at least~2, \ie, $\val\vcppopI \ge 2$. 
The reason is the following: 
Adding (\ref{vcp:VcpPopStarEdge1}) and (\ref{vcp:lemmaOddCyclePopIx1}) for any edge $uv \in E$ gives:
$  \msum_{i=1}^{H} g_{i,q}  \ge 2 -  g_{1,q} \ge 1.  $
Therefore we have:
\begin{eqnarray*}
\val\vcppopI = 1 + \msum_{i=1}^{H} g_{i,q} \ge 2.
\end{eqnarray*}
Therefore, for any graph we have $\val\vcppopI \ge \val\vcpass$.
On the other hand for any graph $G$ with $\chi(G) > 2$, we have $\val\vcppopI>2$ by Theorem \ref{vcp:TheoremVcpPopILPbound},
while $\val\vcpass=2$.
\end{proof}

\paragraph*{Chosen vertex $\boldsymbol q$  in model \vcpass}
In contrast to the assignment model, \vcppopI has constraints, namely (\ref{vcp:VcpPopStarChi}), which force the chosen vertex $q$ to take the largest used color. 
The natural question is how the strength of the assignment model changes if we add similar constraints to it.
Equalities (\ref{connection:asspop}), (\ref{vcp:VcpPopConstrPositionsStartEnd}), and (\ref{vcp:VcpPopConstrUniqPositions2}) give
\begin{eqnarray}
      g_{1,v} =& 1- x_{v,1} & \Forall v \in V, \\
      g_{i,v} =& g_{i-1,v} - x_{v,i} & \Forall v \in V,\ i = 2,\dots,H,
\end{eqnarray}
which jointly imply
\begin{eqnarray}
      g_{i,v} = 1- \msum_{j=1}^i x_{v,j} & \Forall v \in V,\ i = 1,\dots,H.
\end{eqnarray}
So we can add (\ref{vcp:VcpPopStarChi}) to the assignment model as follows:
\begin{eqnarray}
    1-\msum_{j=1}^i x_{q,j} \ge 1-\msum_{j=1}^{i} x_{v,j} & \Forall v \in V\setminus \{q\},\ i = 1,\dots,H.
    \label{vcp:VcpAssStarChi}
\end{eqnarray}
Notice that these constraints cannot cut off the solution $(x,w)$ shown as an example in the proof of Lemma \ref{vcp:LemmaPop1Ass}.
So, strengthening \vcpass via (\ref{vcp:VcpAssStarChi}) does not make it stronger than \vcppopI, since the fact $\val\vcpass \le 2$ remains unchanged.

\begin{lemma}
 \vcpass is strictly stronger than \vcppop.
  \label{vcp:LemmaAssPop}
\end{lemma}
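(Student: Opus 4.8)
The plan is to prove both halves of ``strictly stronger'' at once by evaluating $\val\vcpass$ exactly and bounding $\val\vcppop$ from above. By the definition of strictness I must show $\val\vcpass \ge \val\vcppop$ on every (connected) instance and produce one instance where the inequality is strict. I would get both from a single pair of facts: as soon as $G$ has an edge, $\val\vcpass = 2$ while $\val\vcppop \le \nicefrac{3}{2}$; and on the trivial one-vertex instance both models have value $1$.

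First I would pin down $\val\vcpass = 2$ for every graph with at least one edge. The lower bound is immediate: fixing one edge $uv\in E$ and summing the edge constraints (\ref{edge:coloring}) over all colors $i=1,\dots,H$ gives $\msum_{i=1}^H w_i \ge \msum_{i=1}^H (x_{u,i}+x_{v,i}) = 2$, where the last equality is the assignment equations (\ref{unambiguous}). The matching upper bound is already on hand: the half-integral point exhibited in the proof of Lemma~\ref{vcp:LemmaPop1Ass} (namely $x_{v,i}=\nicefrac{1}{2}$ for $i\in\{1,2\}$, $w_1=w_2=1$, all else $0$) is feasible and has objective $2$; one only has to recheck the symmetry-breaking constraints (\ref{mz:cp1}) and (\ref{mz:cp2}), which hold because $|V|\ge 2$.

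Next I would bound $\val\vcppop$ by exhibiting a cheap half-integral point. The idea is that the chosen-vertex constraints (\ref{vcp:VcpPopConstrHighestVertexQ}) forcing $q$ to carry the largest color are harmless in the relaxation if every vertex gets the \emph{same} profile: set $g_{1,v}=\nicefrac{1}{2}$ for all $v\in V$ and $g_{i,v}=0$ for $i\ge 2$ (equivalently $l_{v,1}=0$, $l_{v,i}=1-g_{i-1,v}$). Then (\ref{vcp:VcpPopConstrHighestVertexQ}) holds with equality, the edge constraints (\ref{vcp:edge1})/(\ref{vcp:edge2}) reduce to $\nicefrac{1}{2}+\nicefrac{1}{2}\ge 1$ and $\,\le 1$, and the range and monotonicity constraints are trivial; the objective equals $1+\msum_{i} g_{i,q}=\nicefrac{3}{2}$. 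Hence $\val\vcppop \le \nicefrac{3}{2}$ whenever $G$ has an edge.

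Putting the pieces together proves the claim: on the one-vertex graph both models evaluate to $1$, while on every graph with an edge $\val\vcpass = 2 > \nicefrac{3}{2} \ge \val\vcppop$, so $\val\vcpass \ge \val\vcppop$ holds universally and is strict on, e.g., $K_2$. The step I expect to require the most care is precisely this upper bound on $\val\vcppop$: the tempting route of mapping an optimal \vcpass point directly into the \vcppop polytope stumbles on the symmetry-breaking constraint (\ref{vcp:VcpPopConstrHighestVertexQ}), because a generic assignment solution need not make $q$ latest. Using the symmetric half-integral point sidesteps the mapping altogether, and it is a useful sanity check that this very point violates the strengthened constraint (\ref{vcp:VcpPopStarEdge1}) of \vcppopI, consistent with the earlier result that \vcppopI is strictly stronger than \vcpass.
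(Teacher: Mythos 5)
Your proposal is correct and follows essentially the same route as the paper: the identical half-integral point ($g_{1,v}=\nicefrac12$ for all $v$, all other $g$-variables zero) to show $\val\vcppop\le\nicefrac32$, the same summation of constraints (\ref{edge:coloring}) over all colors combined with (\ref{unambiguous}) to get $\val\vcpass\ge2$, and the same treatment of the one-vertex case. The extra observations you add (the matching upper bound $\val\vcpass\le 2$ and the check against (\ref{vcp:VcpPopStarEdge1})) are sound but not needed for the claim.
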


\begin{proof}
In case $G$ consists of only one vertex, both models have the same objective value~1. 
Therefore we consider the case that $G$ has some edges.
The model \vcppop has the solution 
\begin{align*}
\Forall v \in V: 
&&  l_{vi}= \left\{
				    \begin{array}[2]{ll}
					    0       & i = 1 \\
					    \nicefrac12 & i = 2 \\
					     1	    & i>2,
				    \end{array}
				    \right.
&&  g_{iv}= \left\{
				    \begin{array}[2]{ll}
					    \nicefrac12 & i = 1 \\
					     0	    & i \ge 2, 
				    \end{array}
				    \right.
\end{align*}
of objective value 
$1+\sum_{i=1}^{H} g_{i,q} = 1 \frac 12$, independent of the graph, so
$\val\vcppop \le 1\frac12$.
Adding (\ref{edge:coloring}) for an arbitrary edge $ab \in E$ and $1\le i \le H$ gives 
$$\msum_{i=1}^H w_i \ge \sum_{i=1}^H x_{ai} + \sum_{i=1}^H x_{bi}.$$
From (\ref{unambiguous}) it follows that $\sum_{i=1}^H x_{ai}=\sum_{i=1}^H x_{bi}=1$, which implies that
$\sum_{i=1}^H w_{i}\ge2$.  
Therefore, we have $\val\vcpass \ge 2$, which is strictly larger than 
$\val\vcppop \le 1\frac12$.
So \vcpass is strictly stronger than \vcppop.

\end{proof}

\begin{lemma} \label{vcp:LemmaPopPoph}
 The corresponding pure and hybrid \pop models are equally strong, \ie:
\begin{eqnarray}
  & \vcppop   &= \vcppoph, \\
  & \vcppopI  &= \vcppophI, \\
  & \vcppopII &= \vcppophII.
\end{eqnarray}
\end{lemma}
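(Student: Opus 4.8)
\textit{Proof plan.}
The plan is to show that each hybrid model is merely a reformulation of the corresponding pure model in which the assignment variables $\a_{v,i}$ (and, in the base case, also the $l$-variables) are introduced as auxiliary variables, each defined by a linear equation in the $g$-variables. Since these auxiliary variables are uniquely determined by $g$ and the objective $1+\msum_{i=1}^{H} g_{i,q}$ depends only on $g$, it suffices to verify that projecting out the auxiliary variables turns the hybrid feasible region into exactly that of the pure model. I would treat the three claimed equalities in parallel, since they differ only in the form of the adjacency constraints.

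The core step is a substitution. For the base pair, I would insert the defining equation (\ref{connection:asspop}), $\a_{v,i}=1-(l_{v,i}+g_{i,v})$, into the hybrid adjacency constraint (\ref{vcp:VcpPopHConstrAdjVertices}); this immediately returns $(g_{i,u}+l_{u,i})+(g_{i,v}+l_{v,i})\ge 1$, which is precisely the pure adjacency constraint (\ref{vcp:VcpPopConstrAdjVertices}). For the strengthened pair, I would substitute the $l$-eliminated equations $\a_{v,1}=1-g_{1,v}$ and $\a_{v,i}=g_{i-1,v}-g_{i,v}$ into (\ref{vcp:VcpPopHStarEdge1}) and (\ref{vcp:VcpPopHStarEdge2}); the first yields $g_{1,u}+g_{1,v}\ge 2-g_{1,q}$, i.e.\ (\ref{vcp:VcpPopStarEdge1}), and the second yields $(g_{i-1,u}-g_{i,u})+(g_{i-1,v}-g_{i,v})\le g_{i-1,q}$, i.e.\ (\ref{vcp:VcpPopStarEdge2}). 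All remaining constraints of the hybrid models are either the defining equations themselves or constraints already expressed purely in $g$ (the range constraints and the constraints (\ref{vcp:VcpPopStarChi})). In particular, the extra neighbour constraint (\ref{vcp:VcpPop2chiNeighbor}) that distinguishes the \vcppopII/\vcppophII pair involves only $g$ and appears identically in both, so the argument for $\vcppopII=\vcppophII$ is inherited from that for $\vcppopI=\vcppophI$.

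It remains to check that introducing the auxiliary variables, together with their box constraints $\a_{v,i}\in[0,1]$, neither enlarges nor shrinks the $g$-projection. Given any $g$ feasible for the pure model, I would define the auxiliary variables through their equations and verify that the resulting values lie in $[0,1]$: for the strengthened models this is exactly the observation already recorded before \vcppophI, namely that $\a\ge 0$ is equivalent to the monotonicity (\ref{vcp:VcpPopStarUnambiguous1}) while $\a\le 1$ follows from the range of $g$; the base case is analogous, using (\ref{vcp:VcpPopConstrUniqPositions1}) and (\ref{vcp:VcpPopConstrUniqPositions2}). Conversely, any hybrid-feasible point restricts to a pure-feasible $g$, because the substituted adjacency constraints coincide and, where the monotonicity is not retained explicitly, the box constraints on $\a$ reproduce it. Hence the two feasible regions have identical $g$-projections and identical objectives, so their LP values agree.

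The computations here are entirely routine; the only point requiring care is the bookkeeping of the box constraints, that is, confirming that the bounds $\a_{v,i}\in[0,1]$ in the hybrid model cut off exactly the same $g$-vectors as the monotonicity and range constraints of the pure model, and nothing more. This is where the redundancy remark preceding \vcppophI does the real work; once it is invoked, the equality of LP values follows immediately for all three pairs.
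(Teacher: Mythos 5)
Your proposal is correct and follows essentially the same route as the paper's proof: substituting the defining equations for the auxiliary variables into the hybrid adjacency constraints to recover exactly the pure-model constraints (\ref{vcp:VcpPopConstrAdjVertices}), (\ref{vcp:VcpPopStarEdge1}), and (\ref{vcp:VcpPopStarEdge2}). Your explicit two-way projection argument and the bookkeeping of the $[0,1]$ box constraints spell out what the paper leaves implicit (via its redundancy remark preceding \vcppophI), but it is the same argument.
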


\begin{proof}

Using
equality $\a_{v,i} = 1- (l_{v,i}+g_{i,v})$ (constraint (\ref{connection:asspop}))
in inequality $\a_{u,i} + \a_{v,i} \le  1$ (constraint (\ref{vcp:VcpPopHConstrAdjVertices})) of \vcppoph, transforms it to \vcppop,
thus \vcppop = \vcppoph.
Using equality $\a_{v,1}=1-g_{1,v}$ (constraint (\ref{vcp:VcpPopHConnectionAssPop1})) in inequality $\a_{u,1}+\a_{v,1}\le g_{1,q}$ (constraint (\ref{vcp:VcpPopHConstrAdjVerticesI1})) and 
setting the equality $\a_{v,i}=g_{i-1,v}-g_{i,v}$ (constraint~(\ref{vcp:VcpPopHConnectionAssPop2})) in inequality $\a_{u,i}+\a_{v,i}\le g_{i-1,q}$ (constraint~(\ref{vcp:VcpPopHConstrAdjVerticesI2})) 
gives (\ref{vcp:edge1s}) and (\ref{vcp:edge2s}), respectively.
In this way, one can transform 
\vcppophI to \vcppopI and
\vcppophII to \vcppopII.
\end{proof}

The following theorem summarizes the polyhedral results, shown in Lemmas 
\ref{vcp:LemmaPop2Pop1}--\ref{vcp:LemmaPopPoph}.
\\

\begin{theorem}
The following strength-relationships hold: 
$$
\begin{array}{llll}
    \vcppopII & \succ \vcppopI  & \succ \vcpass  & \succ \vcppop \\
    \ \ \|  &  \ \ \ \ \  \|    & & \ \ \ \ \  \|     \\ 
    \vcppophII & \succ \vcppophI &		 & \succ \vcppoph. 
\end{array}
$$
  \label{theorem:strength}
\end{theorem}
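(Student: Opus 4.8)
The plan is to read each relationship in the diagram off the lemmas already proved, and then close the two remaining edges by combining the equal-strength statements with the strict-strength statements via transitivity of the ordering $\succ$.

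First I would dispatch the top row. The three strict relationships $\vcppopII \succ \vcppopI$, $\vcppopI \succ \vcpass$, and $\vcpass \succ \vcppop$ are exactly the statements of Lemmas \ref{vcp:LemmaPop2Pop1}, \ref{vcp:LemmaPop1Ass}, and \ref{vcp:LemmaAssPop}, so nothing new is required there. Next I would record the three vertical equalities $\vcppop = \vcppoph$, $\vcppopI = \vcppophI$, and $\vcppopII = \vcppophII$, which are precisely the content of Lemma \ref{vcp:LemmaPopPoph}; here ``$=$'' abbreviates equal strength, i.e. $\val M = \val{M'}$ on every instance.

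The only genuine work is the bottom row. For the left edge I would chain an equality with top-row strictness: since $\val\vcppophII = \val\vcppopII \ge \val\vcppopI = \val\vcppophI$ holds on every instance, and Lemma \ref{vcp:LemmaPop2Pop1} exhibits an instance with $\val\vcppopII > \val\vcppopI$, that same instance yields $\val\vcppophII > \val\vcppophI$; hence $\vcppophII \succ \vcppophI$. For the right edge, the $\vcpass$ column is empty in the lower row, so the arrow runs from $\vcppophI$ directly to $\vcppoph$. I would first note that transitivity through the top row gives $\vcppopI \succ \vcppop$: the chain $\vcppopI \succ \vcpass \succ \vcppop$ forces $\val\vcppopI \ge \val\vcpass \ge \val\vcppop$ everywhere, while the separating instance from Lemma \ref{vcp:LemmaAssPop} gives $\val\vcpass > \val\vcppop$ and hence $\val\vcppopI > \val\vcppop$ there. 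Substituting the equalities $\vcppopI = \vcppophI$ and $\vcppop = \vcppoph$ then yields $\vcppophI \succ \vcppoph$.

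The main obstacle, such as it is, is bookkeeping rather than mathematics: I must check that $\succ$ is preserved when either side is replaced by an equally strong model, and that it composes transitively through the intermediate $\vcpass$ node even though $\vcpass$ is absent from the lower row. Both facts are immediate from the definition of strength as the pointwise inequality $\val M \ge \val{M'}$ together with the existence of a single separating instance, so once the lemmas are in place the theorem follows by assembling them.
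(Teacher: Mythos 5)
Your proposal is correct and matches the paper's approach: the paper offers no separate proof, stating only that the theorem summarizes Lemmas \ref{vcp:LemmaPop2Pop1}--\ref{vcp:LemmaPopPoph}, which is exactly how you assemble it. Your explicit bookkeeping for the bottom row (propagating strictness through the equal-strength substitutions $\vcppopII = \vcppophII$, $\vcppopI = \vcppophI$, $\vcppop = \vcppoph$ and through the intermediate $\vcpass$ node) is precisely the step the paper leaves implicit, and it is carried out correctly.
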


\section{Computational Results} 
\label{sec:evaluation}

\def\MalagutiEtAl{Malaguti, Monaci, and Toth}
\def\HeldEtAl{Held, Cook, and Sewell}
\def\Hoeve{van~Hoeve}

\def\nt#1{\normalsize{\textbf{#1}}}
\def\Mour{\textrm{M}_{\textrm{our}}}
\def\Mmmt{\textrm{M}_{\textrm{MMT}}}
\def\Mobdd{\textrm{M}_{\textrm{vH}}}
\def\Bour{\mathcal{D}_{\textrm{our}}}
\def\Bmmt{\mathcal{D}_{\textrm{MMT}}}
\def\Bobdd{\mathcal{D}_{\textrm{vH}}}
\def\Bdimacs{\mathcal{D}}
As previously analyzed, we expect that the explicit ILP models for the vertex coloring problem such as the assignment and \po based models are best suited for sparse graphs, whereas the models relying on independent sets are beneficial for dense graphs.
In our computational experiments we are interested in answering the following questions:

\begin{description}
\item [(Q1):] How do the strengthened \po based models compare to other explicit coloring models such as the assignment model?

\item [(Q2):] 
Does one of the strengthened pure and hybrid partial-ordering based models dominate the other?

\item [(Q3):] How do the strengthened models compare to the sophisticated state-of-the-art algorithms relying on independent sets? %

\end{description}

\subsection{Implementation}
\label{sse:preprocessing}

We have implemented the assignment and the strongest pure and hybrid \po based models \vcppopII and \vcppophII using the Gurobi-python~API.
Our implementation involves the following preprocessings, which are widely used in the literature \cite{gualandi2012exact,Hansen2009,jabrayilov2018new,Malaguti2011,Malaguti2010,MenZab06,MenZab08}:

\begin{enumerate}
  \item [(a)]
A vertex $u$ is \emph{dominated} by vertex $v$, $v\not= u$, if the neighborhood of $u$ is a subset of the neighborhood of $v$. In this case, the vertex $u$ can be deleted, the remaining graph can be colored, and at the end $u$ can get the color of $v$.

  \item [(b)] To reduce the number of variables we are interested in getting a small upper bound $H$ for the number of needed colors.

  \item [(c)]
    One can fix more variables in the assignment and \pop models when precoloring a clique $Q$ with
    $\max ( |Q|H + |\delta(Q)| )$, 
    where $\delta(Q):=\{(u,v) \in E \colon |\{u,v\} \cap Q| = 1\}$ (see \cite{jabrayilov2018new}).

  \item [(d)] 
  Since any clique represents a valid lower bound for the \vcp, in the case of equal lower and upper bounds 
    the optimal value has been found, hence no ILP needs to be solved. 
\end{enumerate}

The preprocessing is done 
using heuristics from the python library \texttt{networkx}. %
The special vertex $q$ is chosen arbitrarily from the clique $Q$ found in the preprocessing, and the remaining vertices in $Q$ are precolored with colors $1,\dots,|Q|-1$.
  Moreover, our implementations involve the branching priority option of the Gurobi solver;
  the higher the degree of a vertex, the higher the branching priority of the associated variables.

\subsection{Test setup and benchmark set of graphs}

To solve the implemented models, we used Gurobi 6.5 single-threadedly
on the machine with Intel E5-2640@2.60GHz %
running Ubuntu 18.04. %
For computational comparisons we also compiled the code of the \HeldEtAl~\cite{Held2012} model on a same machine and with the same Gurobi version.
The \texttt{dfmax} benchmark \cite{DIMACS:dfmax}, which is used in the DIMACS 
challenge to compare the results 
obtained with different machines, needs 5.0s for \texttt{r500.5} on our machine.
We performed our evaluations on a subset of the DIMACS \cite{DIMACS:VCP} benchmark set.
From 137~DIMACS graphs we have chosen all instances that have at most \numprint{100000} edges. 
The largest one, namely \texttt{4-FullIns\_5}, of the chosen 116 instances has 
\numprint{4146} vertices and \numprint{77305} edges.
We have set a time limit of three hours.

\subsection{Experimental evaluation}

\begin{table}[t]
\scalebox{0.70}{
  \begin{tabular}{l | ll ll | ll}
    Comparison             &\vcppopII                    &\vcppophII                        &\vcpass                      &HCS\cite{Held2012} &MMT\cite{Malaguti2011}   &van Hoeve\cite{VCPOBDD2022} \\ 
    \hline
    \nt{DIMACS subset (size)}   &$\Bour$ (116)                &$\Bour$ (116)                     &$\Bour$ (116)                &$\Bour$ (116)        &$\Bmmt$ (115)         &$\Bobdd$ (137) \\
    \nt{Machine (CPU)}                &$\Mour$(2.6GHz)	      &$\Mour$(2.6GHz)			 &$\Mour$(2.6GHz)              &$\Mour$(2.6GHz)     &$\Mmmt$(2.4GHz)      &$\Mobdd$(2.33GHz) \\
    \nt{\#solved}               &74/116 (3h)		      &\nt{84/116} (3h)		         &81/116 (3h)                  &62/116 (3h)         &70/115 (10h) &50/137 (1h) \\
    \nt{\#solved/$\Bour\cap \Bmmt$}     &66/103 (3h)		      &\nt{74/103} (3h)		         &71/103 (3h)                  &50/103 (3h) &68/103 (10h) & \\
    \nt{\#solved/$\Bour \cap \Bobdd$(1h)} &74/116 ($\frac{2.33}{2.6}$h) &\nt{82/116} ($\frac{2.33}{2.6}$h) &80/116 ($\frac{2.33}{2.6}$h) &62/116 ($\frac{2.33}{2.6}$h) &	          &49/116 (1h) \\
    \nt{average runtime}        &\nt{71.1} sec                         &261.2 sec                             &168.7 sec                        &92.1 sec                 &434.2 sec    &74.7 sec \\
    \hline
  \end{tabular}
}
\caption{Experimental results for state-of-the-art ILP models for the DIMACS \cite{DIMACS:VCP} benchmark set $\Bdimacs=\Bobdd$ and subsets $\Bour$ and $\Bmmt$
} %
\label{vcp:tableVcpDimacsSummary}
\end{table}

Table \ref{vcp:tableVcpDimacsSummary} summarizes some experimental results on the DIMACS instances.
Column 1 shows the subjects of the comparison, while the remaining column 2--7 display the associated results of the models 
\vcppopII,
\vcppophII,
\vcpass,
\HeldEtAl~\cite{Held2012}, referred to as HCS, 
\MalagutiEtAl~\cite{Malaguti2011}, referred to as MMT,   
and \Hoeve~\cite{VCPOBDD2022}. 
While Columns 2--5 show the results of our computations, the last two columns are taken from~\cite{Malaguti2011,VCPOBDD2022}. 
Let $\Bdimacs$ denote the DIMACS~\cite{DIMACS:VCP} benchmark set and let 
$\Bour$, $\Bmmt$ and $\Bobdd$ denote the subsets of $\Bdimacs$ used in our experiments, 
in \MalagutiEtAl~\cite{Malaguti2011} and in \Hoeve~\cite{VCPOBDD2022}, respectively.
$\Bour$ and $\Bmmt$ contain 116 and 115 instances of $\Bdimacs$, respectively, while $\Bobdd$ encloses the whole $\Bdimacs$.
The intersection of $\Bour \cap \Bmmt$ contains 103 common instances.
As $\Bobdd = \Bdimacs$, we have $\Bour \cap \Bobdd = \Bour$.

The first two rows show the instances and machines used in the experiments.
The next three rows display the number of solved instances by the algorithms.
The items in these rows have the form ''$n/N\ (t)$``, which shows the number $n$ of solved instances out of $N$ instances within the time limit $t$.
Row ``\#solved`` shows the total number of solved instances by these six models.
We can see that the model \vcppophII has solved the largest number of instances, namely 84 out of 116 instances within the time limit of 3 hours, to provable optimality.
\begin{figure}[bt]
  \def\scale{1.1}
  \begin{center}
  \includegraphics[page=1,scale=\scale]{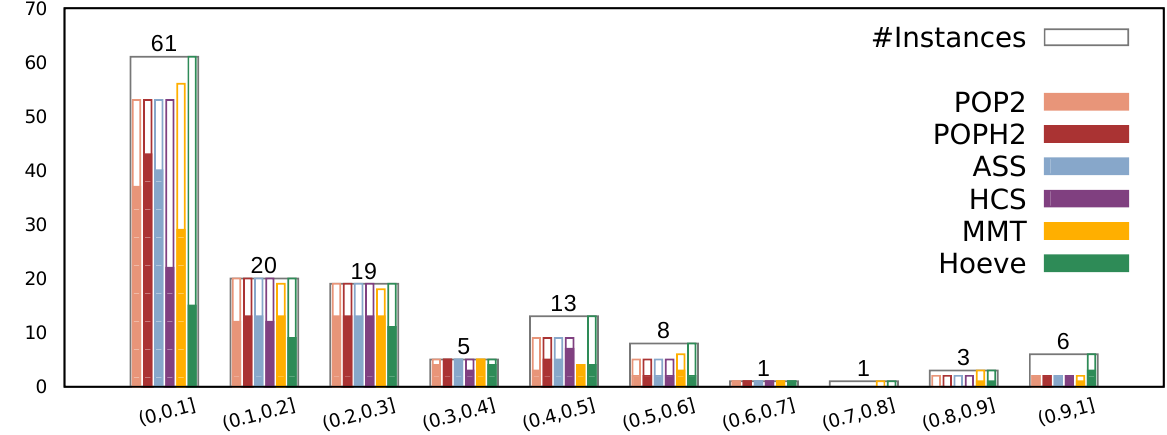}
  \end{center}
  \caption{Number of solved DIMACS instances depending on graph density $\frac{2|E|}{|V|(|V|-1)}$
 (The density-groups %
 have 
 61, 20, 19, 5, 13, 8, 1, 1, 3, 6 instances, respectively, and 137 in total.) 
  }
  \label{vcp:PlotsDIMACSresolution}
\end{figure}

Figure \ref{vcp:PlotsDIMACSresolution} visualizes some details about this row.
We partitioned the 137 DIMACS instances into ten classes $(0,0.1],\dots,(0.9,1]$ according to their densities $\frac{2|E|}{|V|(|V|-1)}$. 
These classes contain 
61, 20, 19, 5, 13,  8, 1, 1, 3, 6
instances, respectively.
The figure provides the following information for each density class: 
The wide box shows the total number of instances in the class.
For each ILP model, the corresponding filled bar shows the number of instances solved by that model, while the unfilled bar shows the number of instances that were evaluated but could not be solved by the model.
For example in the first class there are 61 instances.
We evaluated the models \vcppopII, \vcppophII, \vcpass and HCS using 53 out of the 61 instances, where the models solved 37, 43, 40, 22, respectively.
\MalagutiEtAl~\cite{Malaguti2011} and \Hoeve~\cite{VCPOBDD2022} 
evaluated 56 and 61 out of the 61 instances and solved 29 and 15 instances, respectively.
We can see that most DIMACS instances are sparse graphs; the first density class, in which \vcppophII dominates all other tested models, includes about half of the instances.

Row ``\#solved/$\Bour \cap \Bmmt$`` of Table~\ref{vcp:tableVcpDimacsSummary} compares, our experimental results with the results reported by \MalagutiEtAl~\cite{Malaguti2011} based on the 103 common DIMACS instances.
Also here we can see that the model \vcppophII has solved the largest number of instances, namely 74.
The last row ``\#solved/$\Bour \cap \Bobdd$(1h)`` compares, based on the 116 common DIMACS instances, our experimental results with the results reported by \Hoeve~\cite{VCPOBDD2022}.
For a fair comparison, it is interesting to see the number of instances solved for each algorithm at similar computing times.       
To this end, this row shows the number of instances solved for each algorithm at the computing time that roughly corresponds to 1 hour of the machine used in \cite{VCPOBDD2022}.
For example the computing time $\frac{2.33}{2.6}$ hours in our machine roughly corresponds to 1 hour of the $\frac{2.33GHz}{2.6GHz}$ times slower machine used in \cite{VCPOBDD2022}.
Also here we can see that the model \vcppophII has solved the largest number of instances, namely 82 out of the 116 instances.
Finally, the last row shows for each model the average running times spent for the solved instances.

\def\dimacsvcp{DIMACS~\cite{DIMACS:VCP}\xspace}

Answering our questions from the beginning, we can state the following:

\begin{description}

\item [(Q1):] 
Our computational study shows that the strengthened hybrid model \vcppophII outperforms the assignment formulation \vcpass on the DIMACS benchmark set, \ie, it solves more \dimacsvcp instances than \vcpass. %
This is not true for the pure model \vcppopII. %

\item [(Q2):] 
Although \vcppopII and \vcppophII are equally strong by Theorem~\ref{theorem:strength}, the latter solves more instances than the former.
The explanation lies in the fact described in Remark~\ref{vcp:PopVsPoph}, which implies that the constraint matrix of the hybrid model becomes sparser than that of the pure model for graphs with $\frac{|E|}{|V|} \ge 1.5$, which is the case for all considered 116 instances.

 \item [(Q3):] 
 Our computational results and the studies reported in the literature (also see Table~\ref{vcp:tableVcpDimacsSummary}) show that the strengthened models \vcppopII and \vcppophII dominate the sophisticated state-of-the-art algorithms \cite{Held2012,Malaguti2011,VCPOBDD2022} on the sparse \dimacsvcp instances with density $\frac{2|E|}{|V|(|V|-1)}\le0.1$ (Figure~\ref{vcp:PlotsDIMACSresolution}).
 Moreover, a closer look at the single instances (Table \ref{vcp:tableVcpDimacs1} in Appendix) shows that the hybrid model \vcppophII is the only model, which can solve all five DIMACS \texttt{GPIA} graphs that originated from the estimation of the sparse Jacobian matrix problem~\cite{ColemanMore1983}.
 To our knowledge, \vcppophII is the first one that can solve the open instance \texttt{abb313GPIA}.

\end{description}

\bibliography{vcppop}

\begin{thebibliography}{10}

\bibitem{Brlaz1979NewMT}
Daniel Br{\'e}laz.
\newblock New methods to color the vertices of a graph.
\newblock {\em Commun. ACM}, 22:251--256, 1979.

\bibitem{Brown1972ChromaticSA}
J.~R. Brown.
\newblock Chromatic scheduling and the chromatic number problem.
\newblock {\em Management Science}, 19(4-part-1):456--463, 1972.

\bibitem{byskov2002chromatic}
Jesper~Makholm Byskov.
\newblock {\em Chromatic Number in Time $O (2.4023^n)$: Using Maximal
  Independent Sets}.
\newblock BRICS, 2002.

\bibitem{Campelo2004}
M.~B. Camp{\^{e}}lo, R.~C. Corr{\^{e}}a, and Y.~Frota.
\newblock Cliques, holes and the vertex coloring polytope.
\newblock {\em Inf. Process. Lett.}, 89(4):159--164, 2004.
\newblock \href {http://dx.doi.org/10.1016/j.ipl.2003.11.005}
  {\path{doi:10.1016/j.ipl.2003.11.005}}.

\bibitem{Campelo2008}
Manoel~B. Camp{\^{e}}lo, Victor~A. Campos, and Ricardo~C. Corr{\^{e}}a.
\newblock On the asymmetric representatives formulation for the vertex coloring
  problem.
\newblock {\em Discrete Applied Mathematics}, 156(7):1097--1111, 2008.

\bibitem{Campos2015}
V.~{Campos}, R.~C. {Corr{\^e}a}, D.~{Delle Donne}, J.~{Marenco}, and
  A.~{Wagler}.
\newblock {Polyhedral studies of vertex coloring problems: The asymmetric
  representatives formulation}.
\newblock {\em ArXiv e-prints}, August 2015.
\newblock \href {http://arxiv.org/abs/1509.02485} {\path{arXiv:1509.02485}}.

\bibitem{Christofides1971}
N.~Christofides.
\newblock {An algorithm for the chromatic number of a graph}.
\newblock {\em The Computer Journal}, 14(1):38--39, 01 1971.
\newblock URL: \url{https://doi.org/10.1093/comjnl/14.1.38}, \href
  {http://arxiv.org/abs/https://academic.oup.com/comjnl/article-pdf/14/1/38/1019738/140038.pdf}
  {\path{arXiv:https://academic.oup.com/comjnl/article-pdf/14/1/38/1019738/140038.pdf}},
  \href {http://dx.doi.org/10.1093/comjnl/14.1.38}
  {\path{doi:10.1093/comjnl/14.1.38}}.

\bibitem{ColemanMore1983}
Thomas~F. Coleman and Jorge~J. Moré.
\newblock Estimation of sparse jacobian matrices and graph coloring problems.
\newblock {\em SIAM Journal on Numerical Analysis}, 20(1):187--209, 1983.
\newblock URL: \url{http://www.jstor.org/stable/2157179}.

\bibitem{CornailGraham1973}
D.~G. Corneil and B.~Graham.
\newblock An algorithm for determining the chromatic number of a graph.
\newblock {\em SIAM Journal on Computing}, 2(4):311--318, 1973.
\newblock URL: \url{https://doi.org/10.1137/0202026}, \href
  {http://arxiv.org/abs/https://doi.org/10.1137/0202026}
  {\path{arXiv:https://doi.org/10.1137/0202026}}, \href
  {http://dx.doi.org/10.1137/0202026} {\path{doi:10.1137/0202026}}.

\bibitem{DIMACS:dfmax}
Benchmarking machines and testing solutions, 2002.
\newblock URL: \url{http://mat.gsia.cmu.edu/COLOR02/BENCHMARK/benchmark.tar}.

\bibitem{Epp03}
D.~Eppstein.
\newblock Small maximal independent sets and faster exact graph coloring.
\newblock {\em Journal of Graph Algorithms and Applications}, 7(2):131--140,
  2003.

\bibitem{Garey1979}
Michael~R. Garey and David~S. Johnson.
\newblock {\em Computers and Intractability: A Guide to the Theory of
  NP-Completeness}.
\newblock W. H. Freeman \& Co., 1979.

\bibitem{gedicke2021}
Sven Gedicke, Adalat Jabrayilov, Benjamin Niedermann, Petra Mutzel, and
  Jan-Henrik Haunert.
\newblock Point feature label placement for multi-page maps on small-screen
  devices.
\newblock {\em Computers \& Graphics}, 2021.
\newblock URL:
  \url{https://www.sciencedirect.com/science/article/pii/S0097849321001564},
  \href {http://dx.doi.org/https://doi.org/10.1016/j.cag.2021.07.019}
  {\path{doi:https://doi.org/10.1016/j.cag.2021.07.019}}.

\bibitem{gualandi2012exact}
S.~Gualandi and F.~Malucelli.
\newblock Exact solution of graph coloring problems via constraint programming
  and column generation.
\newblock {\em INFORMS J. on Computing}, 24(1):81--100, 2012.

\bibitem{Hansen2009}
P.~Hansen, M.~Labb\'e, and D.~Schindl.
\newblock Set covering and packing formulations of graph coloring: Algorithms
  and first polyhedral results.
\newblock {\em Discrete Optimization}, 6(2):135 -- 147, 2009.

\bibitem{Held2012}
S.~Held, W.~Cook, and E.C. Sewell.
\newblock Maximum-weight stable sets and safe lower bounds for graph coloring.
\newblock {\em Mathematical Programming Computation}, 4(4):363--381, 2012.

\bibitem{Husfeldt2015}
Thore Husfeldt.
\newblock {\em Graph colouring algorithms}, pages 277--303.
\newblock Cambridge University Press, United Kingdom, 2015.

\bibitem{jabrayilov2018new}
A.~Jabrayilov and P.~Mutzel.
\newblock New integer linear programming models for the vertex coloring
  problem.
\newblock In {\em Latin American Theoretical Informatics Symposium LATIN 2018},
  volume 10807 of {\em LNCS}, pages 640--652. Springer, 2018.

\bibitem{Lawler1976ANO}
E.~Lawler.
\newblock A note on the complexity of the chromatic number problem.
\newblock {\em Inf. Process. Lett.}, 5:66--67, 1976.

\bibitem{Lewis2016}
R.~M.~R. Lewis.
\newblock Guide to graph colouring: Algorithms and applications.
\newblock {\em Guide to Graph Colouring}, 2016.

\bibitem{Lima2018ExactAF}
A.~M.~D. Lima and R.~Carmo.
\newblock Exact algorithms for the graph coloring problem.
\newblock {\em RITA}, 25:57--73, 2018.

\bibitem{Malaguti2011}
E.~Malaguti, M.~Monaci, and P.~Toth.
\newblock An exact approach for the vertex coloring problem.
\newblock {\em Discrete Optimization}, 8(2):174--190, 2011.

\bibitem{Malaguti2010}
E.~Malaguti and P.~Toth.
\newblock A survey on vertex coloring problems.
\newblock {\em International Transactions in Operational Research}, 17:1--34,
  2010.

\bibitem{McDiarmid1979DeterminingTC}
Colin McDiarmid.
\newblock Determining the chromatic number of a graph.
\newblock {\em SIAM J. Comput.}, 8:1--14, 1979.

\bibitem{MehTri96}
A.~Mehrotra and M.~Trick.
\newblock A column generation approach for graph coloring.
\newblock {\em INFORMS Journal On Computing}, 8(4):344--354, 1996.

\bibitem{MenZab06}
I.~M{\'e}ndez-D{\'i}az and P.~Zabala.
\newblock A branch-and-cut algorithm for graph coloring.
\newblock {\em Discrete Applied Mathematics}, 154(5):826--847, 2006.

\bibitem{MenZab08}
I.~M{\'e}ndez-D{\'i}az and P.~Zabala.
\newblock A cutting plane algorithm for graph coloring.
\newblock {\em Discrete Applied Mathematics}, 156(2):159 -- 179, 2008.

\bibitem{olariu2021improving}
Emanuel~Florentin Olariu and Cristian Frasinaru.
\newblock Improving lower bounds for equitable chromatic number.
\newblock {\em CoRR}, abs/2106.03409, 2021.
\newblock URL: \url{https://arxiv.org/abs/2106.03409}, \href
  {http://arxiv.org/abs/2106.03409} {\path{arXiv:2106.03409}}.

\bibitem{Polzin2009}
Tobias Polzin and Siavash Vahdati-Daneshmand.
\newblock {\em Approaches to the Steiner Problem in Networks}, pages 81--103.
\newblock Springer Berlin Heidelberg, Berlin, Heidelberg, 2009.
\newblock URL: \url{https://doi.org/10.1007/978-3-642-02094-0_5}, \href
  {http://dx.doi.org/10.1007/978-3-642-02094-0_5}
  {\path{doi:10.1007/978-3-642-02094-0_5}}.

\bibitem{Rahmani2020IntegerLP}
Hazhar Rahmani and J.~O'Kane.
\newblock Integer linear programming formulations of the filter partitioning
  minimization problem.
\newblock {\em Journal of Combinatorial Optimization}, 40:431--453, 2020.

\bibitem{Segundo2012}
P.~San Segundo.
\newblock A new {DSATUR}-based algorithm for exact vertex coloring.
\newblock {\em Computers \& Operations Research}, 39(7):1724 -- 1733, 2012.

\bibitem{Sewell1993AnIA}
E.~Sewell.
\newblock An improved algorithm for exact graph coloring.
\newblock In {\em Cliques, Coloring, and Satisfiability}, 1993.

\bibitem{DIMACS:VCP}
M.~Trick.
\newblock {DIMACS} graph coloring instances, 2002.
\newblock URL: \url{http://mat.gsia.cmu.edu/COLOR02/}.

\bibitem{VCPOBDD2022}
Willem-Jan van Hoeve.
\newblock Graph coloring with decision diagrams.
\newblock {\em Mathematical Programming}, 192:631--674, 2022.
\newblock \href {http://dx.doi.org/10.1007/s10107-021-01662-x}
  {\path{doi:10.1007/s10107-021-01662-x}}.

\bibitem{zykov1949some}
Alexander~Aleksandrovich Zykov.
\newblock On some properties of linear complexes.
\newblock {\em Matematicheskii sbornik}, 66(2):163--188, 1949.

\end{thebibliography}

\newpage
\appendix

\section{Computational Results}

Table \ref{vcp:tableVcpDimacs1} shows the detailed results for the 116 \dimacsvcp instances that we have used in our computational evaluations.
Recall that we have considered only the instances with up to \numprint{100000} edges.

Columns 1-3 show the instance names and sizes.
Columns 4--15 display the lower and upper bounds as well as the running times of the models \vcppopII, \vcppophII, \vcpass, and HCS~\cite{Held2012} that we have obtained within a time limit of three hours in our computations. %
The times are provided in seconds. %

Columns 16--18 are taken from the literature~\cite{Malaguti2011} and show the results of sophisticated state-of-the-art algorithm suggested by \MalagutiEtAl~\cite{Malaguti2011}. 
These results have been obtained within a time limit of ten hours.
The columns contain the character ``-'' if the associated instances were not considered in~\cite{Malaguti2011}.

The last three columns are taken from~\cite{VCPOBDD2022} and show the results of the sophisticated algorithm suggested by \Hoeve~\cite{VCPOBDD2022}. 
Note that these results have been obtained within a time limit of one hour.

\begin{sidewaystable}
\def\nt#1{\normalsize{\textbf{#1}}}
\small
\scalebox{0.7}{
  \setlength{\tabcolsep}{7pt} 
  \setlength{\extrarowheight}{2pt}
  \begin{tabular}{l  rr || ccc | ccc | ccc | ccc  || ccc | ccc }
               &    &     &    &\vcppopII  &         &   &\vcppophII   &         &    &\vcpass    &         &        &HCS~\cite{Held2012}     &               &        &MMT~\cite{Malaguti2011}  &         &    &v.Hoeve\cite{VCPOBDD2022}    &       \\
  Instance     &V   &E    &lb  &ub         &time[s]  &lb  &ub          &time[s]  &lb      &ub     &time[s]  &lb      &ub                      &time[s]        &lb      &ub                       &time[s]  &lb       &ub                      &time[s]       \\

    \hline
    1-FullIns\_3     &30    &100    &4.0     &4.0        &0.0      &4.0      &4.0        &0.0      &4.0        &4.0        &0.0      &4          &4    &0.0      &-       &-    &-        &4        &4    &0.06     \\
1-FullIns\_4     &93    &593    &5.0     &5.0        &0.0      &5.0      &5.0        &0.0      &5.0        &5.0        &0.0      &4          &5    &timeout  &4       &5    &timeout  &4        &5    &timeout  \\
1-FullIns\_5     &282   &3247   &6.0     &6.0        &0.7      &6.0      &6.0        &0.9      &6.0        &6.0        &0.7      &4          &6    &timeout  &4       &6    &timeout  &4        &6    &timeout  \\
1-Insertions\_4  &67    &232    &5.0     &5.0        &57.8     &5.0      &5.0        &33.3     &5.0        &5.0        &8.7      &4          &5    &timeout  &3       &5    &timeout  &3        &5    &timeout  \\
1-Insertions\_5  &202   &1227   &4.0     &6.0        &timeout  &4.0      &6.0        &timeout  &4.0        &6.0        &timeout  &4          &6    &timeout  &3       &6    &timeout  &3        &6    &timeout  \\
1-Insertions\_6  &607   &6337   &4.0     &7.0        &timeout  &4.0      &7.0        &timeout  &4.0        &7.0        &timeout  &4          &7    &timeout  &3       &7    &timeout  &3        &7    &timeout  \\
2-FullIns\_3     &52    &201    &5.0     &5.0        &0.0      &5.0      &5.0        &0.0      &5.0        &5.0        &0.0      &5          &5    &0.0      &5       &5    &2.9      &5        &5    &0.83     \\
2-FullIns\_4     &212   &1621   &6.0     &6.0        &0.0      &6.0      &6.0        &0.0      &6.0        &6.0        &0.0      &5          &6    &timeout  &5       &6    &timeout  &5        &6    &timeout  \\
2-FullIns\_5     &852   &12201  &7.0     &7.0        &1.8      &7.0      &7.0        &1.5      &7.0        &7.0        &1.8      &5          &7    &timeout  &5       &7    &timeout  &4        &7    &timeout  \\
2-Insertions\_3  &37    &72     &4.0     &4.0        &0.1      &4.0      &4.0        &0.1      &4.0        &4.0        &0.3      &4          &4    &246.1    &-       &-    &-        &3        &4    &timeout  \\
2-Insertions\_4  &149   &541    &4.0     &5.0        &timeout  &4.0      &5.0        &timeout  &4.0        &5.0        &timeout  &3          &5    &timeout  &3       &5    &timeout  &3        &5    &timeout  \\
2-Insertions\_5  &597   &3936   &4.0     &6.0        &timeout  &4.0      &6.0        &timeout  &4.0        &6.0        &timeout  &3          &6    &timeout  &3       &6    &timeout  &3        &6    &timeout  \\
3-FullIns\_3     &80    &346    &6.0     &6.0        &0.0      &6.0      &6.0        &0.0      &6.0        &6.0        &0.0      &6          &6    &0.0      &6       &6    &2.9      &6        &6    &13.15    \\
3-FullIns\_4     &405   &3524   &7.0     &7.0        &0.0      &7.0      &7.0        &0.0      &7.0        &7.0        &0.0      &6          &7    &timeout  &6       &7    &timeout  &5        &7    &timeout  \\
3-FullIns\_5     &2030  &33751  &8.0     &8.0        &3.2      &8.0      &8.0        &3.0      &8.0        &8.0        &2.7      &6          &8    &timeout  &6       &8    &timeout  &5        &8    &timeout  \\
3-Insertions\_3  &56    &110    &4.0     &4.0        &0.9      &4.0      &4.0        &0.7      &4.0        &4.0        &1.9      &3          &4    &timeout  &3       &4    &timeout  &3        &4    &timeout  \\
3-Insertions\_4  &281   &1046   &4.0     &5.0        &timeout  &4.0      &5.0        &timeout  &3.0        &5.0        &timeout  &3          &5    &timeout  &3       &5    &timeout  &3        &5    &timeout  \\
3-Insertions\_5  &1406  &9695   &3.0     &6.0        &timeout  &3.0      &6.0        &timeout  &3.0        &6.0        &timeout  &3          &6    &timeout  &2       &6    &timeout  &3        &6    &timeout  \\
4-FullIns\_3     &114   &541    &7.0     &7.0        &0.0      &7.0      &7.0        &0.0      &7.0        &7.0        &0.0      &7          &7    &0.0      &7       &7    &3.4      &7        &7    &99.58    \\
4-FullIns\_4     &690   &6650   &8.0     &8.0        &0.0      &8.0      &8.0        &0.0      &8.0        &8.0        &0.0      &7          &8    &timeout  &7       &8    &timeout  &7        &8    &timeout  \\
4-FullIns\_5     &4146  &77305  &9.0     &9.0        &7.7      &9.0      &9.0        &9.6      &9.0        &9.0        &11.0     &7          &9    &timeout  &7       &9    &timeout  &7        &9    &timeout  \\
4-Insertions\_3  &79    &156    &4.0     &4.0        &20.3     &4.0      &4.0        &10.0     &4.0        &4.0        &167.2    &3          &4    &timeout  &3       &4    &timeout  &3        &4    &timeout  \\
4-Insertions\_4  &475   &1795   &3.0     &5.0        &timeout  &3.0      &5.0        &timeout  &3.0        &5.0        &timeout  &3          &5    &timeout  &3       &5    &timeout  &3        &5    &timeout  \\
5-FullIns\_3     &154   &792    &8.0     &8.0        &0.0      &8.0      &8.0        &0.0      &8.0        &8.0        &0.0      &8          &8    &0.0      &8       &8    &4.6      &8        &8    &1233.00  \\
5-FullIns\_4     &1085  &11395  &9.0     &9.0        &0.1      &9.0      &9.0        &0.1      &9.0        &9.0        &0.1      &8          &9    &timeout  &8       &9    &timeout  &8        &9    &timeout  \\
\nt{abb313GPIA}       &1555  &53356  &8.0     &10.0       &timeout  &\nt{9.0}      &\nt{9.0}        &\nt{9784.9}   &8.0        &11.0       &timeout  &8          &10   &timeout  &8       &9    &timeout  &8        &10   &timeout  \\
anna             &138   &493    &11      &11         &0        &11       &11         &0        &11         &11         &0        &11         &11   &0.0      &11      &11   &3.6      &11       &11   &0.01     \\
ash331GPIA       &662   &4181   &4.0     &4.0        &15.5     &4.0      &4.0        &154.2    &4.0        &4.0        &846.9    &4          &6    &timeout  &4       &4    &45.9     &4        &5    &timeout  \\
ash608GPIA       &1216  &7844   &4.0     &4.0        &1902.7   &4.0      &4.0        &282.9    &4.0        &4.0        &337.0    &4          &6    &timeout  &4       &4    &2814.8   &3        &5    &timeout  \\
\nt{ash958GPIA}       &1916  &12506  &\nt{4.0}     &\nt{4.0}        &\nt{342.7}    &\nt{4.0}      &\nt{4.0}        &\nt{813.1}    &4.0        &6.0        &timeout  &$-\infty$  &6    &timeout  &3       &4    &timeout  &3        &5    &timeout  \\
david            &87    &406    &11      &11         &0        &11       &11         &0        &11         &11         &0        &11         &11   &0.0      &11      &11   &0.2      &11       &11   &0.01     \\
DSJC1000.1       &1000  &49629  &6.0     &29.0       &timeout  &6.0      &29.0       &timeout  &$-\infty$  &29.0       &timeout  &$-\infty$  &25   &timeout  &5       &20   &timeout  &6        &26   &timeout  \\
DSJC125.1        &125   &736    &5.0     &5.0        &3.1      &5.0      &5.0        &4.0      &5.0        &5.0        &1.3      &5          &6    &timeout  &5       &5    &142.0    &5        &6    &timeout  \\
DSJC125.5        &125   &3891   &11.0    &20.0       &timeout  &13.0     &20.0       &timeout  &13.0       &19.0       &timeout  &17         &22   &timeout  &17      &17   &18050.8  &14       &22   &timeout  \\
DSJC125.9        &125   &6961   &35.0    &47.0       &timeout  &40.0     &44.0       &timeout  &42.0       &44.0       &timeout  &44         &45   &timeout  &44      &44   &3896.9   &44       &44   &25.11    \\
DSJC250.1        &250   &3218   &5.0     &9.0        &timeout  &5.0      &9.0        &timeout  &5.0        &9.0        &timeout  &7          &10   &timeout  &6       &8    &timeout  &5        &10   &timeout  \\
DSJC250.5        &250   &15668  &12.0    &41.0       &timeout  &14.0     &41.0       &timeout  &14.0       &38.0       &timeout  &26         &36   &timeout  &20      &28   &timeout  &16       &36   &timeout  \\
DSJC250.9        &250   &27897  &40.0    &$+\infty$  &timeout  &41.0     &$+\infty$  &timeout  &42.0       &90.0       &timeout  &72         &77   &timeout  &71      &72   &timeout  &71       &74   &timeout  \\
DSJC500.1        &500   &12458  &5.0     &17.0       &timeout  &6.0      &18.0       &timeout  &6.0        &18.0       &timeout  &$-\infty$  &16   &timeout  &5       &12   &timeout  &5        &16   &timeout  \\
DSJC500.5        &500   &62624  &13.0    &70.0       &timeout  &14.0     &$+\infty$  &timeout  &13.0       &$+\infty$  &timeout  &43         &65   &timeout  &16      &48   &timeout  &18       &68   &timeout  \\

    \hline
  \end{tabular}
}
\caption{Results of the six models for the \dimacsvcp instances with up to \numprint{100000} edges}
\label{vcp:tableVcpDimacs1}
\end{sidewaystable}

\begin{sidewaystable}
\def\nt#1{\normalsize{\textbf{#1}}}
\small
\scalebox{0.7}{
  \setlength{\tabcolsep}{7pt} 
  \setlength{\extrarowheight}{2pt}
  \begin{tabular}{l  rr || ccc | ccc | ccc | ccc  || ccc | ccc }
               &    &     &    &\vcppopII  &         &   &\vcppophII   &         &    &\vcpass    &         &        &HCS~\cite{Held2012}     &               &        &MMT~\cite{Malaguti2011}  &         &    &v.Hoeve\cite{VCPOBDD2022}    &       \\
  Instance     &V   &E    &lb  &ub         &time[s]  &lb  &ub          &time[s]  &lb      &ub     &time[s]  &lb      &ub                      &time[s]        &lb      &ub                       &time[s]  &lb       &ub                      &time[s]       \\

    \hline
DSJR500.1        &500   &3555   &12.0    &12.0       &2.7      &12.0     &12.0       &0.7      &12.0       &12.0       &0.5      &12         &12   &1165.9   &12      &12   &35.3     &12       &12   &0.03     \\
DSJR500.5        &500   &58862  &115.0   &135.0      &timeout  &120.0    &$+\infty$  &timeout  &115.0      &128.0      &timeout  &122        &132  &timeout  &122     &122  &342.2    &112      &126  &timeout  \\
flat300\_20\_0   &300   &21375  &11.0    &$+\infty$  &timeout  &12.0     &$+\infty$  &timeout  &12.0       &43.0       &timeout  &20         &20   &10.2     &-       &-    &-        &16       &42   &timeout  \\
flat300\_26\_0   &300   &21633  &11.0    &45.0       &timeout  &13.0     &$+\infty$  &timeout  &12.0       &43.0       &timeout  &26         &26   &24.3     &-       &-    &-        &16       &43   &timeout  \\
flat300\_28\_0   &300   &21695  &12.0    &45.0       &timeout  &13.0     &$+\infty$  &timeout  &14.0       &43.0       &timeout  &28         &41   &timeout  &-       &-    &-        &16       &44   &timeout  \\
fpsol2.i.1       &269   &11654  &65      &65         &0        &65       &65         &0        &65         &65         &0        &65         &65   &0.6      &65      &65   &10.6     &65       &65   &10.33    \\
fpsol2.i.2       &363   &8691   &30      &30         &0        &30       &30         &0        &30         &30         &0        &30         &30   &0.4      &30      &30   &11.2     &30       &30   &0.21     \\
fpsol2.i.3       &363   &8688   &30      &30         &0        &30       &30         &0        &30         &30         &0        &30         &30   &0.3      &30      &30   &10.0     &30       &30   &0.25     \\
games120         &120   &638    &9       &9          &0        &9        &9          &0        &9          &9          &0        &9          &9    &0.0      &9       &9    &0.2      &9        &9    &92.74    \\
homer            &556   &1629   &13      &13         &0        &13       &13         &0        &13         &13         &0        &13         &13   &0.1      &-       &-    &-        &10       &13   &timeout  \\
huck             &74    &301    &11      &11         &0        &11       &11         &0        &11         &11         &0        &11         &11   &0.0      &11      &11   &0.2      &11       &11   &0.23     \\
inithx.i.1       &519   &18707  &54      &54         &0        &54       &54         &0        &54         &54         &0        &54         &54   &2.1      &54      &54   &21.0     &54       &54   &3.17     \\
inithx.i.2       &558   &13979  &31      &31         &0        &31       &31         &0        &31         &31         &0        &31         &31   &0.7      &31      &31   &9.2      &31       &31   &121.09   \\
inithx.i.3       &559   &13969  &31      &31         &0        &31       &31         &0        &31         &31         &0        &31         &31   &0.8      &31      &31   &9.9      &31       &31   &163.70   \\
jean             &77    &254    &10      &10         &0        &10       &10         &0        &10         &10         &0        &10         &10   &0.0      &10      &10   &0.2      &10       &10   &0.01     \\
le450\_15a       &450   &8168   &15.0    &16.0       &timeout  &15.0     &15.0       &1032.0   &15.0       &15.0       &1157.2   &15         &17   &timeout  &15      &15   &0.4      &15       &18   &timeout  \\
le450\_15b       &450   &8169   &15.0    &15.0       &1088.2   &15.0     &15.0       &1384.6   &15.0       &15.0       &1261.2   &15         &17   &timeout  &15      &15   &0.2      &15       &17   &timeout  \\
le450\_15c       &450   &16680  &15.0    &23.0       &timeout  &15.0     &$+\infty$  &timeout  &15.0       &27.0       &timeout  &15         &15   &349.7    &15      &15   &3.1      &15       &25   &timeout  \\
le450\_15d       &450   &16750  &15.0    &23.0       &timeout  &15.0     &$+\infty$  &timeout  &15.0       &27.0       &timeout  &15         &15   &383.3    &15      &15   &3.8      &15       &25   &timeout  \\
le450\_25a       &450   &8260   &25      &25         &0        &25       &25         &0        &25         &25         &0        &25         &25   &2.7      &25      &25   &0.1      &25       &25   &0.17     \\
le450\_25b       &450   &8263   &25      &25         &0        &25       &25         &0        &25         &25         &0        &25         &25   &2.5      &25      &25   &0.1      &25       &25   &0.14     \\
le450\_25c       &450   &17343  &25.0    &27.0       &timeout  &25.0     &27.0       &timeout  &25.0       &27.0       &timeout  &25         &28   &timeout  &25      &25   &1356.6   &25       &29   &timeout  \\
le450\_25d       &450   &17425  &25.0    &27.0       &timeout  &25.0     &27.0       &timeout  &25.0       &28.0       &timeout  &25         &29   &timeout  &25      &25   &66.6     &25       &28   &timeout  \\
le450\_5a        &450   &5714   &5.0     &9.0        &timeout  &5.0      &5.0        &103.9    &5.0        &5.0        &44.7     &5          &6    &timeout  &5       &5    &0.3      &5        &10   &timeout  \\
le450\_5b        &450   &5734   &5.0     &7.0        &timeout  &5.0      &5.0        &89.5     &5.0        &5.0        &214.6    &5          &5    &760.8    &5       &5    &0.2      &5        &9    &timeout  \\
le450\_5c        &450   &9803   &5.0     &5.0        &622.4    &5.0      &5.0        &24.3     &5.0        &5.0        &23.5     &5          &5    &368.6    &5       &5    &0.1      &5        &8    &timeout  \\
le450\_5d        &450   &9757   &5.0     &5.0        &760.9    &5.0      &5.0        &24.6     &5.0        &5.0        &29.3     &5          &6    &timeout  &5       &5    &0.2      &5        &5    &0.03     \\
miles1000        &128   &3216   &42.0    &42.0       &6.7      &42.0     &42.0       &1.2      &42.0       &42.0       &0.5      &42         &42   &0.1      &42      &42   &0.2      &42       &42   &1.87     \\
miles1500        &128   &5198   &73      &73         &0        &73       &73         &0        &73         &73         &0        &73         &73   &0.2      &73      &73   &0.1      &73       &73   &1.10     \\
miles250         &125   &387    &8       &8          &0        &8        &8          &0        &8          &8          &0        &8          &8    &0.0      &8       &8    &5.0      &8        &8    &0.01     \\
miles500         &128   &1170   &20      &20         &0        &20       &20         &0        &20         &20         &0        &20         &20   &0.0      &20      &20   &3.7      &20       &20   &0.04     \\
miles750         &128   &2113   &31.0    &31.0       &1.0      &31.0     &31.0       &0.6      &31.0       &31.0       &0.3      &31         &31   &0.1      &31      &31   &0.2      &31       &31   &0.12     \\
mug100\_1        &100   &166    &4.0     &4.0        &0.1      &4.0      &4.0        &0.1      &4.0        &4.0        &0.1      &4          &4    &1.0      &4       &4    &14.4     &3        &4    &timeout  \\
mug100\_25       &100   &166    &4.0     &4.0        &0.2      &4.0      &4.0        &0.2      &4.0        &4.0        &0.2      &4          &4    &0.9      &4       &4    &12.0     &3        &4    &timeout  \\
mug88\_1         &88    &146    &4.0     &4.0        &0.1      &4.0      &4.0        &0.1      &4.0        &4.0        &0.1      &4          &4    &0.6      &4       &4    &9.6      &3        &4    &timeout  \\
mug88\_25        &88    &146    &4.0     &4.0        &0.3      &4.0      &4.0        &0.1      &4.0        &4.0        &0.1      &4          &4    &0.6      &4       &4    &10.6     &3        &4    &timeout  \\
mulsol.i.1       &138   &3925   &49      &49         &0        &49       &49         &0        &49         &49         &0        &49         &49   &0.2      &49      &49   &0.2      &49       &49   &0.40     \\
mulsol.i.2       &173   &3885   &31      &31         &0        &31       &31         &0        &31         &31         &0        &31         &31   &0.1      &31      &31   &4.7      &31       &31   &0.19     \\
mulsol.i.3       &174   &3916   &31      &31         &0        &31       &31         &0        &31         &31         &0        &31         &31   &0.1      &31      &31   &0.2      &31       &31   &0.19     \\
mulsol.i.4       &175   &3946   &31      &31         &0        &31       &31         &0        &31         &31         &0        &31         &31   &0.1      &31      &31   &0.2      &31       &31   &0.19     \\

    \hline
  \end{tabular}
}
\caption*{Table \ref{vcp:tableVcpDimacs1} (continued)} 
\end{sidewaystable}

\begin{sidewaystable}
\def\nt#1{\normalsize{\textbf{#1}}}
\small
\scalebox{0.7}{
  \setlength{\tabcolsep}{7pt} 
  \setlength{\extrarowheight}{2pt}
  \begin{tabular}{l  rr || ccc | ccc | ccc | ccc  || ccc | ccc }
               &    &     &    &\vcppopII  &         &   &\vcppophII   &         &    &\vcpass    &         &        &HCS~\cite{Held2012}     &               &        &MMT~\cite{Malaguti2011}  &         &    &v.Hoeve\cite{VCPOBDD2022}    &       \\
  Instance     &V   &E    &lb  &ub         &time[s]  &lb  &ub          &time[s]  &lb      &ub     &time[s]  &lb      &ub                      &time[s]        &lb      &ub                       &time[s]  &lb       &ub                      &time[s]       \\

    \hline
mulsol.i.5       &176   &3973   &31      &31         &0        &31       &31         &0        &31         &31         &0        &31         &31   &0.1      &31      &31   &6.0      &31       &31   &0.19     \\
myciel3          &11    &20     &4.0     &4.0        &0.0      &4.0      &4.0        &0.0      &4.0        &4.0        &0.0      &4          &4    &0.0      &4       &4    &0        &4        &4    &0.04     \\
myciel4          &23    &71     &5.0     &5.0        &0.1      &5.0      &5.0        &0.1      &5.0        &5.0        &0.1      &5          &5    &8.1      &5       &5    &4        &5        &5    &7.03     \\
myciel5          &47    &236    &6.0     &6.0        &5.2      &6.0      &6.0        &2.5      &6.0        &6.0        &3.5      &5          &6    &timeout  &$-\infty$       &$+\infty$    &timeout  &5        &6    &timeout  \\
myciel6          &95    &755    &6.0     &7.0        &timeout  &7.0      &7.0        &2286.0   &7.0        &7.0        &7730.0   &5          &7    &timeout  &4       &7    &timeout  &4        &7    &timeout  \\
myciel7          &191   &2360   &5.0     &8.0        &timeout  &5.0      &8.0        &timeout  &6.0        &8.0        &timeout  &5          &8    &timeout  &5       &8    &timeout  &4        &8    &timeout  \\
qg.order30       &900   &26100  &30.0    &$+\infty$  &timeout  &30.0     &30.0       &70.1     &30.0       &30.0       &55.9     &30         &32   &timeout  &30      &30   &0.2      &30       &32   &timeout  \\
qg.order40       &1600  &62400  &40.0    &55.0       &timeout  &40.0     &40.0       &4318.9   &$-\infty$  &64.0       &timeout  &40         &42   &timeout  &40      &40   &2.9      &40       &45   &timeout  \\
queen10\_10      &100   &1470   &10.0    &12.0       &timeout  &10.0     &11.0       &timeout  &10.0       &11.0       &timeout  &11         &14   &timeout  &11      &11   &686.9    &10       &14   &timeout  \\
queen11\_11      &121   &1980   &11.0    &13.0       &timeout  &11.0     &13.0       &timeout  &11.0       &13.0       &timeout  &11         &11   &61.9     &11      &11   &1865.7   &10       &15   &timeout  \\
queen12\_12      &144   &2596   &12.0    &14.0       &timeout  &12.0     &14.0       &timeout  &12.0       &14.0       &timeout  &12         &16   &timeout  &12      &13   &timeout  &10       &16   &timeout  \\
queen13\_13      &169   &3328   &13.0    &15.0       &timeout  &13.0     &15.0       &timeout  &13.0       &15.0       &timeout  &13         &17   &timeout  &13      &14   &timeout  &10       &17   &timeout  \\
queen14\_14      &196   &4186   &14.0    &17.0       &timeout  &14.0     &17.0       &timeout  &14.0       &16.0       &timeout  &14         &19   &timeout  &14      &15   &timeout  &10       &19   &timeout  \\
queen15\_15      &225   &5180   &15.0    &18.0       &timeout  &15.0     &18.0       &timeout  &15.0       &18.0       &timeout  &15         &20   &timeout  &15      &16   &timeout  &10       &21   &timeout  \\
queen16\_16      &256   &6320   &16.0    &19.0       &timeout  &16.0     &19.0       &timeout  &16.0       &19.0       &timeout  &16         &21   &timeout  &16      &17   &timeout  &10       &22   &timeout  \\
queen5\_5        &25    &160    &5.0     &5.0        &0.0      &5.0      &5.0        &0.0      &5.0        &5.0        &0.0      &5          &5    &0.0      &5       &5    &0.2      &5        &5    &0.01     \\
queen6\_6        &36    &290    &7.0     &7.0        &0.4      &7.0      &7.0        &0.2      &7.0        &7.0        &0.2      &7          &7    &5.0      &7       &7    &3        &7        &7    &2.35     \\
queen7\_7        &49    &476    &7.0     &7.0        &4.4      &7.0      &7.0        &0.3      &7.0        &7.0        &0.3      &7          &7    &0.0      &7       &7    &0.2      &7        &7    &2.77     \\
queen8\_12       &96    &1368   &12.0    &12.0       &8.1      &12.0     &12.0       &1.9      &12.0       &12.0       &1.1      &12         &12   &9.6      &12      &12   &0.2      &9        &12   &timeout  \\
queen8\_8        &64    &728    &9.0     &9.0        &195.7    &9.0      &9.0        &14.2     &9.0        &9.0        &38.9     &9          &11   &timeout  &9       &9    &3.6      &9        &9    &310.31   \\
queen9\_9        &81    &1056   &9.0     &10.0       &timeout  &10.0     &10.0       &1369.9   &10.0       &10.0       &1621.2   &10         &12   &timeout  &10      &10   &36.6     &10       &12   &timeout  \\
r1000.1          &1000  &14378  &20.0    &20.0       &126.3    &20.0     &20.0       &19.5     &20.0       &20.0       &17.9     &20         &20   &1.2      &-       &-    &-        &20       &20   &0.20     \\
r125.1c          &125   &7501   &46      &46         &0        &46       &46         &0        &46         &46         &0        &46         &46   &0.0      &-       &-    &-        &46       &46   &2.08     \\
r125.1           &122   &209    &5       &5          &0        &5        &5          &0        &5          &5          &0        &5          &5    &0.0      &-       &-    &-        &5        &5    &0.02     \\
r125.5           &125   &3838   &34.0    &36.0       &timeout  &36.0     &36.0       &3.2      &36.0       &36.0       &3.2      &36         &36   &72.9     &-       &-    &-        &36       &36   &212.90   \\
r250.1c          &250   &30227  &64      &64         &0        &64       &64         &0        &64         &64         &0        &64         &64   &0.2      &-       &-    &-        &64       &64   &31.19    \\
r250.1           &250   &867    &8       &8          &0        &8        &8          &0        &8          &8          &0        &8          &8    &0.0      &-       &-    &-        &8        &8    &7.32     \\
r250.5           &250   &14849  &65.0    &66.0       &timeout  &65.0     &65.0       &17.7     &65.0       &65.0       &7.3      &65         &65   &2046.6   &-       &-    &-        &65       &67   &timeout  \\
school1          &385   &19095  &14.0    &14.0       &36.9     &14.0     &14.0       &39.6     &14.0       &14.0       &36.2     &14         &14   &88.4     &14      &14   &0.4      &14       &14   &1.67     \\
school1\_nsh     &352   &14612  &14.0    &14.0       &35.7     &14.0     &14.0       &36.1     &14.0       &14.0       &35.2     &14         &14   &82.1     &14      &14   &17.0     &14       &14   &15.59    \\
wap05a           &905   &43081  &50      &50         &0        &50       &50         &0        &50         &50         &0        &50         &50   &7.9      &50      &50   &293.2    &50       &50   &15.80    \\
wap06a           &947   &43571  &40.0    &$+\infty$  &timeout  &40.0     &$+\infty$  &timeout  &40.0       &$+\infty$  &timeout  &40         &44   &timeout  &40      &40   &175.0    &40       &43   &timeout  \\
will199GPIA      &701   &6772   &7.0     &7.0        &9.0      &7.0      &7.0        &2.5      &7.0        &7.0        &5.5      &7          &7    &2.3      &7       &7    &80.7     &6        &7    &timeout  \\
zeroin.i.1       &126   &4100   &49      &49         &0        &49       &49         &0        &49         &49         &0        &49         &49   &0.2      &49      &49   &3.8      &49       &49   &0.32     \\
zeroin.i.2       &157   &3541   &30      &30         &0        &30       &30         &0        &30         &30         &0        &30         &30   &0.1      &30      &30   &4.4      &30       &30   &0.17     \\
zeroin.i.3       &157   &3540   &30      &30         &0        &30       &30         &0        &30         &30         &0        &30         &30   &0.0      &30      &30   &4.5      &30       &30   &0.20     \\
 
    \hline
  \end{tabular}
}
\caption*{Table \ref{vcp:tableVcpDimacs1} (continued)} 
\end{sidewaystable}

\end{document}